\newcommand{\MeijerG}[8][\bigg]{G^{{ #2 },{ #3 }}_{{ #4 },{ #5 }} #1( \begin{matrix} #6 \\ #7 \end{matrix}\, #1\vert\, #8 #1)}
\numberwithin{equation}{section}
\newtheorem{theorem}{Theorem}[section]
\newtheorem{proposition}[theorem]{Proposition}
\theoremstyle{remark}
\newtheorem{remark}[theorem]{Remark}
\newcommand{\ta}{\tilde{a}}
\newcommand{\tb}{\tilde{b}}
\begin{document}

\title[Rate of convergence for hard edge scaling]{Rate of convergence at the hard edge for various P\'olya ensembles of
positive definite matrices}

\author{Peter J. Forrester}
\address{School of Mathematical and Statistics, ARC Centre of Excellence for Mathematical and Statistical Frontiers, The University of Melbourne, Victoria 3010, Australia}
\email{pjforr@unimelb.edu.au}

\author{Shi-Hao Li}
\address{ School of Mathematical and Statistics, ARC Centre of Excellence for Mathematical and Statistical Frontiers, The University of Melbourne, Victoria 3010, Australia}
\email{lishihao@lsec.cc.ac.cn}

\date{}

\dedicatory{}

\keywords{}

\begin{abstract}
The theory of P\'olya ensembles of positive definite random matrices provides structural formulas
for the corresponding biorthogonal pair, and correlation kernel, which are well suited to computing the hard edge large
$N$ asymptotics. Such an analysis is carried out for
 products of Laguerre ensembles, the Laguerre Muttalib-Borodin ensemble, and products of Laguerre ensembles and their inverses.
 The latter includes as a special case the Jacobi unitary ensemble. In each case the hard edge scaled kernel permits an
 expansion in powers of $1/N$, with the leading  term given   in a structured form involving the hard edge scaling of the
 biorthogonal pair. The Laguerre and Jacobi ensembles have the special feature that their hard edge scaled kernel ---
 the Bessel kernel --- is symmetric and this leads to there being a choice of hard edge scaling variables for which
 the rate of convergence of the correlation functions is $O(1/N^2)$.
\end{abstract}

\maketitle

\section{Introduction}
There are many settings in random matrix theory for which the eigenvalues (assumed real) can be scaled in relation to the
matrix size in such a way that the limiting support is compact. This is referred to as a global scaling. As some concrete examples,
let $X$ be an $N \times N$ standard complex Gaussian matrix, and construct from this the Hermitian matrices
$H_1 = {1 \over 2} (X + X^\dagger)$ and $H_2 = X^\dagger X$. The set of matrices $H_1$ ($H_2$) are said to form the
Gaussian unitary ensemble (special case of the Laguerre unitary ensemble), and have joint eigenvalue probability
density function (PDF) proportional to
\begin{equation}\label{1.1}
\prod_{l=1}^N w(x_l) \prod_{1 \le j < k \le N} (x_k - x_j)^2, \qquad w(x) = \left \{ \begin{array}{cc} e^{-x^2}, & {\rm matrices} \: H_1 \\
 e^{-x} \chi_{x > 0}, & {\rm matrices} \: H_2; \end{array} \right.
 \end{equation}
see e.g.~\cite{Fo10,PS11}. Here $\chi_A = 1$ for $A$ true, $\chi_A = 0$ otherwise.

Scaling the eigenvalues $x_j \mapsto \sqrt{2N} x_j $ (matrices $H_1$) and $x_j \mapsto 4 N x_j $ (matrices $H_2$), it is a standard
result that as $N \to \infty$ the spectrum is supported on the intervals $(-1,1)$ and $(0,1)$ respectively. Among the endpoints of
the intervals of support, the point $x = 0$ for the global scaling of the matrices $H_2$ is special. Thus the region $x < 0$
to the other side of this endpoint has strictly zero eigenvalue density for all values of $N$, because $H_2$ is positive
definite. For this reason the endpoint $x = 0$ in this example is called a hard edge. The hard edge notion extends beyond the class of
matrix ensembles permitting a global scaling to include heavy tailed distributions --- an example of the latter is given in
Section \ref{S3.3} below. The essential point then is that the limiting eigenvalue density is nonzero for $x>0$, and strictly
zero for $x<0$.

In this paper our interest is in the approach to a limiting hard edge state for various ensembles of positive definite matrices.
A hard edge state refers to the statistical distribution formed when the eigenvalues are scaled to have nearest neighbour
spacing of order unity as $N \to \infty$. For the matrices $H_2$, or more generally the ensemble of matrices with
weight function 
\begin{equation}\label{Lw}
w(x) = x^a e^{-x} \chi_{x > 0}
 \end{equation}
 (Laguerre weight, realised for $a = n - N \in \mathbb Z_{\ge 0}$ as the eigenvalue PDF of
 matrices $X^\dagger X$ with $X$ an $n \times N$ complex standard Gaussian matrix) with parameter $a > - 1$,  this takes place for the scaling of the eigenvalues
$x_j \mapsto x_j / 4N$, and gives rise to the hard edge state specified by the $k$-point correlations (see \cite[\S 7.2]{Fo10})
\begin{equation}\label{1.2}
\rho_{(k)}^{\rm hard}(x_1,\dots,x_k) = \det [ K^{\rm hard}(x_j, x_l;a) ]_{j,l=1}^k,
 \end{equation}
 where, with $J_a(u)$ denoting the Bessel function,
\begin{equation}\label{1.3}
K^{\rm hard}(x,y;a)  
 = {1 \over 4} \int_0^1 J_a(\sqrt{xt})  J_a(\sqrt{yt}) \, dt.
\end{equation}

For finite $N$ the $k$-point correlation function is defined in terms of the joint eigenvalue PDF, $P_N$ say,
according to
\begin{equation}\label{1.4}
\rho_{(k)}(x_1,\dots,x_k) = {N! \over (N - k)!} \int_{-\infty}^\infty dx_{k+1} \cdots  \int_{-\infty}^\infty dx_N \, P_N(x_1,\dots, x_N).
 \end{equation}
 For eigenvalue PDFs of the form (\ref{1.1}), the correlation function (\ref{1.4}) admits the determinant evaluation (see e.g.~\cite[\S 5.1]{Fo10})
\begin{equation}\label{1.5a} 
\rho_{(k)}(x_1,\dots,x_k) =  \det [ K_N(x_j, x_l ) ]_{j,l=1}^k,
 \end{equation}
 where
 \begin{align}\label{1.5b} 
  K_N(x,y)  = \Big ( w(x) w(y) \Big )^{1/2} \sum_{n=0}^{N-1} {1 \over h_n} p_n(x) p_n(y) \nonumber \\
 \end{align}
 In (\ref{1.5b}) $\{p_n(x)\}$ refers to the set of orthogonal polynomials with respect to the weight function $w(x)$ ---
 $p_n$ of degree $n$ and chosen to be monic for convenience --- with norm $h_n$,
 \begin{equation}\label{1.5c}   
 \int_{-\infty}^\infty w(x) p_m(x) p_n(x) \, dx = h_n \delta_{m,n}.
  \end{equation}
  In the case of the Laguerre weight, the polynomials $p_n(x)$ are proportional to the Laguerre
  polynomials $L_n^{(a)}(x)$.
  
  Recently, attention has been given to  the rate of convergence to the hard edge limiting kernel
  (\ref{1.3}). One line of motivation came from a question posed by
 Edelman, Guionnet and P\'ech\'e \cite{EGP16}. These authors, taking a viewpoint in numerical analysis, took up the problem of studying
 finite $N$ effects in the hard edge scaling of the distribution of the smallest singular value of a (complex) standard Gaussian matrix. 
 With $E^{\rm LUE}(0;(0,s))$ denoting the probability that there are no eigenvalues in the interval $(0,s)$ of the LUE, it was conjectured
 in \cite{EGP16} that
 \begin{equation}\label{pk5} 
E^{\rm LUE}(0;(0,s/(4N))) =  E^{\rm hard}(0;(0,s)) + {a \over 2N} s {d \over ds} E^{\rm hard}(0;(0,s)) + O \Big ( {1 \over N^2} \Big ),
 \end{equation}
 where
 $$
 E^{\rm hard}(0;(0,s)) = \lim_{N \to \infty}  E^{\rm LUE}(0;(0,s/(4N))),
 $$
 and thus  \cite{Bo16,PS16},
  \begin{equation}\label{pk7} 
 E^{\rm LUE}\bigg ( 0; \Big ( 0, {s \over 4 N + 2a} \Big )  \bigg ) =  E^{\rm hard}(0;(0,s)) + O \Big ( {1 \over N^2} \Big ),
 \end{equation} 
which moreover is the optimal rate of convergence. 

 Subsequently Bornemann \cite{Bo16}
 provided a proof of (\ref{pk5}) which involved extending the limit formula (\ref{1.2}) to the large $N$ expansion
 \begin{align}\label{pk6} 
 {1 \over 4 N} K_N^{(L)} \Big ( {X \over 4 N} , {Y \over 4 N} \Big ) & =  K^{\rm hard}(X,Y)  + {1 \over N} {a \over 8} J_a(\sqrt{X})  J_a(\sqrt{Y})  + O \Big ( {1 \over N^2} \Big ) \nonumber \\
 &  =  K^{\rm hard}(X,Y)  + {1 \over N}  {a \over 2} \bigg ( x {\partial \over \partial x} +   y {\partial \over \partial y}  + 1 \bigg )  K^{\rm hard}(X,Y) + O \Big ( {1 \over N^2} \Big ),
 \end{align}
 valid uniformly for $X,Y \in [0,s]$. In fact knowledge of (\ref{pk6}) is sufficient to establish (\ref{pk5}).
 We remark too that analogous to (\ref{pk7}), it follows from  (\ref{pk6}) that
\begin{equation}\label{pk8} 
 {1 \over 4 N + 2a } K_N^{(L)} \Big ( {X \over 4 N + 2a} , {Y \over 4 N + 2a} \Big )  =  K^{\rm hard}(X,Y)  + O \Big ( {1 \over N^2} \Big ),
 \end{equation}
 and this implies (\ref{pk7}).
 
 Our aim in this work is to extend hard edge scaling results of the type (\ref{pk6}) to examples of a recently isolated structured class of random
 matrices known as P\'olya ensembles \cite{KK16}. The definition of these ensembles, which include the Laguerre unitary ensemble, the Jacobi unitary ensemble,
 products of these ensembles, and their Muttalib-Borodin generalisations, will be given in Section \ref{S2.1}. The benefit of the structures provided
 by the P\'olya ensemble class is seen by our revision of the key formulas in Section \ref{S2.2}, where we also extend the theory by exhibiting differential
 recurrences satisfied by the associated biothogonal pair, and a differential identity satisfied by the correlation kernel. In Section \ref{S2.3} we make note of some
 asymptotic formulas relating to ratios of gamma functions which will be used in our subsequent large $N$ hard edge analysis. The latter is undertaken is Section \ref{S3},
 starting with products of Laguerre ensembles, then the Laguerre Muttalib-Borodin ensemble, and finally products of Laguerre ensembles and their inverses,
 with the latter including as a special case the Jacobi unitary ensemble.
 
 The Jacobi unitary ensemble is specified by the eigenvalue PDF (\ref{1.1}) with weight 
 \begin{equation}\label{A.2}
 x^a (1 - x)^b \chi_{0 < x < 1}.
 \end{equation}
 Our results of Section \ref{S3.3} imply that
  \begin{align}\label{pk6a} 
 {1 \over 4 N^2} K_N^{(J)} \Big ( {X \over 4 N^2} , {Y \over 4 N^2} \Big ) & =  K^{\rm hard}(X,Y)  +  { a+b \over 2 N} J_a(\sqrt{X})  J_a(\sqrt{Y})  + O \Big ( {1 \over N^2} \Big ) \nonumber \\
 &  =  K^{\rm hard}(X,Y)  +   { a +b \over N}  \bigg ( x {\partial \over \partial x} +   y {\partial \over \partial y}  + 1 \bigg )  K^{\rm hard}(X,Y) + O \Big ( {1 \over N^2} \Big ),
 \end{align}
and thus 
 \begin{equation}\label{A.2a}
  {1 \over 4 \tilde{N}^2} K_N^{(J)} \Big ( {X \over 4 \tilde{N}^2} , {Y \over 4 \tilde{N}^2} \Big )  \bigg |_{\tilde{N} = N + (a+b)/2} =
   K^{\rm hard}(X,Y)  +   O \Big ( {1 \over N^2} \Big ).
   \end{equation}
   This gives an explanation for recent results in \cite{MMM19} relating to the large $N$ form of the distribution
   of the smallest eigenvalue in the Jacobi unitary ensemble. In Appendix A large $N$ expansions of the latter
   quantity are extended to all Jacobi $\beta$-ensembles with $\beta$ even.

  \section{Preliminaries}
  \subsection{P\'olya ensembles --- definitions}\label{S2.1}
  The Vandermonde determinant identity tells us that
 \begin{equation}\label{2.1}
 \det [ x_k^{j-1} ]_{j,k=1}^N = \det [p_{j-1}(x_k) ]_{j,k=1}^{N} = \prod_{1 \le j < k \le N} (x_k - x_j),
   \end{equation}
   where $\{p_{l}(x)\}_{l=0}^{N-1}$ are arbitrary monic orthogonal polynomials, $p_l$ of degree $l$.
   A generalisation of (\ref{1.1}) is therefore an eigenvalue PDF proportional to
  \begin{equation}\label{2.2}
   \det [p_{j-1}(x_k) ]_{j,k=1}^{N} \det [ w_{j-1}(x_k) ]_{j,k=1}^N
 \end{equation}
 for some polynomials $\{p_{l}(x)\}_{l=0}^{N-1}$ and functions
 $\{ w_j(x) \}_{j=0}^{N-1}$ --- note though that in general there is no guarantee
   (\ref{2.2}) will be positive. In \cite{KZ14} eigenvalue PDFs (\ref{2.2}) were given the name
 polynomial ensembles.    
 
 In \cite{KK16,KK19} a further specialisation of (\ref{2.2}),
  \begin{equation}\label{2.3} 
\det [p_{j-1}(x_k) ]_{j,k=1}^{N}   \det \bigg [ \Big ( - x_k {\partial \over \partial x_k} \Big )^{j-1} w(x_k) \bigg ]_{j,k=1}^N,
 \end{equation}
 was proposed. Assuming all the eigenvalues are positive, it was  shown that this class of eigenvalue
 PDF is closed under multiplicative convolution. At first  PDFs of the form (\ref{2.3}) were referred to as
 polynomial ensembles of derivative type, but subsequently with the requirement that they be non-negative,
 it was pointed out in \cite{FKK17} that it is more apt to use the term P\'olya ensemble.  The invariance of a
 determinant under the elementary row operation of adding one multiple of a row to another shows
   \begin{align}\label{2.4} 
 \det \bigg [ \Big ( - x_k {\partial \over \partial x_k} \Big )^{j-1} w(x_k) \bigg ]_{j,k=1}^N
 & =     \det \bigg [  \prod_{l=1}^{j-1} \Big ( - x_k {\partial \over \partial x_k} - l \Big ) w(x_k)  \bigg ]_{j,k=1}^N \nonumber \\
 & =  \det \bigg [ {\partial^{j-1} \over \partial x_k^{j-1}} \Big ( (-x_k)^{j-1} w(x_k) \Big ) \bigg ]_{j,k=1}^N.
 \end{align}
 In relation to the second line, note that it is in fact an equality that
   \begin{equation}\label{2.5} 
    \prod_{l=1}^{j-1} \Big ( - x {\partial \over \partial x} - l \Big ) w(x) = {d^{j-1} \over d x^{j-1}} \Big ( (-x)^{j-1} w(x) \Big ).
 \end{equation}    
 
 The differential operator on the RHS of (\ref{2.5}) reveals that the Laguerre unitary ensemble fits the framework of
P\'olya ensembles. Thus choosing $w(x)$ to be given by (\ref{Lw}), the Rodrigues formula for the Laguerre polynomials
tells us that
  \begin{equation}\label{2.6} 
  {d^{j-1} \over d x^{j-1}} \Big ( (-x)^{j-1} w(x) \Big ) = (-1)^{j-1} (j-1)! w(x) L_{j-1}^{(a)}(x),
  \end{equation} 
  and so, up to proportionality, (\ref{2.3}) reduces to
   \begin{equation}\label{2.7}   
   \prod_{l=1}^N x_l^a e^{-x_l} \det [ p_{j-1}(x_k) ]_{j,k=1}^N    \det [ L_{j-1}^{(a)}(x_k) ]_{j,k=1}^N.
   \end{equation} 
  In view of (\ref{2.1}), this corresponds to the eigenvalue PDF for the Laguerre unitary ensemble. The advantage in  working
  within the P\'olya ensemble framework is that it reveals a mechanism to obtain the  asymptotic expansion of the
  correlation kernel  (\ref{1.5b}) at the hard edge, which applies at once to a much wider class of random matrix ensembles.
  The reason for this are certain general structural formulas applicable to all P\'olya ensembles. These will be revised next.
  
  \subsection{P\'olya ensembles --- biorthogonal system and correlation kernel}\label{S2.2}
  It is standard in random matrix theory that the ensembles (\ref{2.2}) are determinantal, meaning that the $k$-point correlation
  functions have the form (\ref{1.5a}). Moreover, if the polynomials $\{p_l(x) \}_{l=0}^N$ and the functions $\{q_j(x)\}_{j=0}^N$ --- the latter
  chosen from ${\rm span} \, \{w_j(x) \}_{j=0}^{N}$ --- have the
  biorthogonal property
     \begin{equation}\label{2.8} 
 \int_{-\infty}^\infty p_m(x) q_n(x) \, dx = \delta_{m,n},
 \end{equation}   
 then the correlation kernel has the simple form
    \begin{equation}\label{2.9}   
    K_N(x,y) = \sum_{j=0}^{N-1} p_{j}(x) q_{j}(y);
 \end{equation}   
 see e.g.~\cite[\S 5.8]{Fo10}.
 While in general computation of  the LU (lower/ upper triangular) decomposition of a certain inverse matrix
  used to construct the biorthogonal
 functions (see e.g.~\cite[Proof of Prop.~5.8.1]{Fo10}), this cannot be expected to result in
a tractable    formula for (\ref{2.9}), permitting large $N$ analysis, without further structures.
It is at this stage that the utility of P\'olya ensembles shows itself: special functional forms for the
 biorthogonal system hold true, and moreover there is a summed up form of the kernel
 as an integral analogous to (\ref{1.3}), which together facilitate a large $N$ analysis.
 
 The formulas, which are due to Kieburg and K\"osters \cite{KK16}, involve the Mellin transform
 of the weight $w$ in (\ref{2.3}),
   \begin{equation}\label{Me}
   \mathcal M[w](s) := \int_0^\infty y^{s-1} w(y) \, dy.
  \end{equation}  
One has that the polynomials $\{p_l(x)\}_{l=0}^N$ in the biorthogonal pair $\{p_j, q_k\}$ are specified by
   \begin{equation}\label{Me1} 
 p_n(x) = (-1)^n  n!   \mathcal M[w](n+1) \sum_{j=0}^n{  (-x)^{j}  \over j! (n-j)!   \mathcal M[w](j+1) } ,
 \end{equation}  
and that the functions $\{q_l(x)\}_{l=0}^N$  --- chosen from the span of the functions specifying the columns in (\ref{2.3}) ---
are specified by the Rodrigues type formula
  \begin{equation}\label{Me2} 
  q_n(x) = {1 \over n! \mathcal M[w](n+1)} {d^n \over d x^n} \Big((-x)^n w(x)\Big).
 \end{equation} 
 Moreover, the correlation kernel can be written in a form generalising the final expression in (\ref{1.3}),
  \begin{equation}\label{Me3}    
  K_N(x,y) = - N { \mathcal M[w](N+1) \over  \mathcal M[w](N)}  \int_0^1 p_{N-1}(xt) q_N(yt) \, dt.
   \end{equation} 
  
  In \cite{KK16} the integral form (\ref{Me3}) of the correlation kernel was derived by first converting
  (\ref{Me1}) and (\ref{Me2}) to integral forms, which allow for the summation to be carried out
  in closed form. The identification with the RHS of (\ref{Me3}) then follows after some manipulation.
  In a special case this strategy was first given in \cite{KZ14}.
  An alternative method of derivation is also possible, as we will now show, which involves first
  identifying differential recurrences satisfied by each of the $p_n(x)$ and $q_n(x)$.
  (We remark that other examples of differential recurrences can be found in a number of recent studies in random matrix
  theory \cite{Ku19,FK19,FK20,Fo20a,Fo20b}.)
  
  \begin{proposition}\label{PR1}
  Let $p_n(x)$ and $q_n(x)$ be specified by (\ref{Me1}) and (\ref{Me2}).
  These functions satisfy the differential recurrences
  \begin{align}
  x {d \over dx} p_n(x) & =  n p_n(x) + n {\mathcal M [w](n+1) \over \mathcal M [w] (n) } p_{n-1}(x) \label{PQ1} \\
  x {d \over dx} q_n(x) & =  - {(n+1) \mathcal M[w](n+2) \over  \mathcal M[w](n+1)} q_{n+1}(x) + (n+1) q_n(x).\label{PQ2}
  \end{align}
  A corollary of these recurrences is the differential identity
   \begin{equation}\label{Me4} 
  \Big ( x{\partial \over \partial x} + y{\partial \over \partial y}  + 1 \Big )   K_N(x,y) = -  N { \mathcal M[w](N+1)  \over  \mathcal M[w](N)}  p_{N-1}(x) q_N(y),
    \end{equation}
    which implies (\ref{Me3}).
  \end{proposition}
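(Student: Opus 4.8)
The plan is to prove the two differential recurrences first and then obtain the identity (\ref{Me4}) as a telescoping consequence, finally recovering the integral representation (\ref{Me3}) by a uniqueness argument for a first order partial differential equation.

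For (\ref{PQ1}) I would differentiate the explicit polynomial (\ref{Me1}) directly. Since $x\frac{d}{dx}(-x)^{j}=j\,(-x)^{j}$, the operator $x\frac{d}{dx}$ simply inserts a factor $j$ into the $j$-th summand of (\ref{Me1}). It then suffices to substitute (\ref{Me1}) for $p_n$ and $p_{n-1}$ on the right-hand side of (\ref{PQ1}), pull out the common prefactor $(-1)^{n}n!\,\mathcal{M}[w](n+1)$, and compare the coefficient of $(-x)^{j}$ for each $0\le j\le n$. This comparison collapses to the elementary identity $\frac{n}{(n-j)!}-\frac{1}{(n-1-j)!}=\frac{j}{(n-j)!}$, with the endpoints $j=0$ and $j=n$ handled automatically; the step is pure bookkeeping.

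For (\ref{PQ2}) the efficient route is operator-theoretic, exploiting (\ref{2.5}). Setting $g_n(x):=\frac{d^{n}}{dx^{n}}\big((-x)^{n}w(x)\big)$, so that $q_n=g_n/\big(n!\,\mathcal{M}[w](n+1)\big)$, I would use the operator form $g_n=\prod_{l=1}^{n}\big(-x\frac{d}{dx}-l\big)w$. Because every factor is a polynomial in the single operator $-x\frac{d}{dx}$, the factors commute, and one may peel off the outermost one to get $g_{n+1}=\big(-x\frac{d}{dx}-(n+1)\big)g_n$. Rearranging this into an expression for $x\,g_n'$ and re-expressing $g_{n+1}$ and $g_n$ through $q_{n+1}$ and $q_n$ yields (\ref{PQ2}). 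Recognising this commutativity is the one genuinely non-routine step in the proposition.

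With both recurrences available, I would differentiate $K_N(x,y)=\sum_{j=0}^{N-1}p_j(x)q_j(y)$ term by term, applying $x\partial_x$ to each $p_j$ via (\ref{PQ1}) and $y\partial_y$ to each $q_j$ via (\ref{PQ2}). The diagonal contributions proportional to $p_j(x)q_j(y)$ --- those from the $n p_n$ term of (\ref{PQ1}), from the $q_n$ term of (\ref{PQ2}), and from the explicit $+1$ --- cancel, so the summand reduces to $A_j-A_{j+1}$ with $A_j:=j\,\frac{\mathcal{M}[w](j+1)}{\mathcal{M}[w](j)}\,p_{j-1}(x)q_j(y)$, the remaining off-diagonal piece being recognised as exactly $-A_{j+1}$ after an index shift. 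The sum then telescopes to $A_0-A_N=-A_N$, since $A_0=0$, which is precisely (\ref{Me4}). Finally, to pass from (\ref{Me4}) to (\ref{Me3}), I would apply $x\partial_x+y\partial_y+1$ to the right-hand side of (\ref{Me3}) and use the chain-rule identity $\frac{d}{dt}\big[t\,p_{N-1}(xt)q_N(yt)\big]=\big(x\partial_x+y\partial_y+1\big)p_{N-1}(xt)q_N(yt)$; integrating over $t\in(0,1)$ the lower endpoint contributes nothing and the upper endpoint reproduces the right-hand side of (\ref{Me4}). Thus $K_N$ and the integral in (\ref{Me3}) satisfy the same first order equation, and their difference --- necessarily homogeneous of degree $-1$ along rays through the origin --- must vanish by the regularity of both expressions at the origin. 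The two points demanding care are this concluding uniqueness argument and keeping the index shifts in the telescoping consistent.
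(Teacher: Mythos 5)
Your argument is correct and, for the two recurrences and the telescoping to (\ref{Me4}), is essentially the paper's own proof: (\ref{PQ1}) by elementary manipulation of the series (\ref{Me1}), and (\ref{PQ2}) by acting with $-x\frac{d}{dx}-(n+1)$ on the operator form (\ref{2.5}) of $q_n$, exactly as the paper does. One thing you should state explicitly rather than pass over: what this operator step actually yields is
\begin{equation*}
x\frac{d}{dx}q_n(x) \;=\; -(n+1)\,q_n(x) \;-\; \frac{(n+1)\,\mathcal M[w](n+2)}{\mathcal M[w](n+1)}\,q_{n+1}(x),
\end{equation*}
with coefficient $-(n+1)$ on $q_n$, not $+(n+1)$ as printed in (\ref{PQ2}). (A check against the Laguerre case $w(x)=e^{-x}$, where $q_n(x)=\frac{(-1)^n}{n!}e^{-x}L_n(x)$, using $xL_n'=nL_n-nL_{n-1}$ and the three-term recurrence, confirms the minus sign; the printed (\ref{PQ2}) is a sign typo, and the paper's own intermediate display carries the correct sign before mis-transcribing it.) This is not cosmetic for your write-up: the diagonal cancellation $n-(n+1)+1=0$ that your telescoping relies on holds only with the minus sign; with (\ref{PQ2}) as literally stated the diagonal terms would sum to $2n+2$ and (\ref{Me4}) would not follow. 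Since your derivation produces the minus sign and your cancellation uses it, your proof is internally consistent, but asserting that the rearrangement ``yields (\ref{PQ2})'' while also asserting that the diagonal terms cancel is, as written, contradictory; you should flag the corrected sign.

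For the passage from (\ref{Me4}) to (\ref{Me3}) you take the reverse route to the paper: the paper substitutes $(x,y)\mapsto(tx,ty)$, recognises the left-hand side as $\frac{d}{dt}\big[tK_N(tx,ty)\big]$, and integrates once over $0<t<1$, obtaining (\ref{Me3}) outright; you instead verify that the right-hand side of (\ref{Me3}) satisfies the same first-order equation and conclude by a uniqueness argument for solutions behaving suitably at the origin. Both hinge on the same input, $\lim_{t\to0^+}tK_N(tx,ty)=0$, but note that your phrase ``regularity of both expressions at the origin'' is too strong: for Laguerre-type weights with $-1<a<0$ the functions $q_n$, and hence $K_N$, diverge at the origin, and only the weaker vanishing statement holds (which is also all the paper uses). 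With that caveat your uniqueness detour is sound, though strictly longer than the paper's one-line integration.
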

  
  \begin{proof}
  From the formula (\ref{Me1}),
  $$
  x {d \over dx}  p_n(x) = (-1)^n  n!   \mathcal M[w](n+1) \sum_{j=0}^n (-1)^{j} {j \over j! (n-j)!   \mathcal M[w](j+1) } x^j.
 $$
 Rewrite the $j$ in the denominator of this expression as $n - (n - j)$, and use this to decompose the sum into two.
 Upon some simple manipulation, the identity  (\ref{PQ1}) results.
 
 According to (\ref{2.5}), the  formula (\ref{Me2}) can be rewritten
 $$
  q_n(x) = {1 \over n! \mathcal M[w](n+1)}     \prod_{l=1}^{n} \Big ( - x {\partial \over \partial x} - l \Big )  w(x).
$$
 Acting on both sides with $- x {d \over dx} - (n+1)$ shows
 $$
 \Big (  - x {d \over dx} - (n+1) \Big ) q_n(x) = {(n+1) \mathcal M[w](n+2) \over  \mathcal M[w](n+1)} q_{n+1}(x).
 $$ 
This gives (\ref{PQ2}).

With the differential recurrences (\ref{PQ1}) and (\ref{PQ2}) established, we can use them in the expression (\ref{2.9}) to give
\begin{multline}
\Big ( x{\partial \over \partial x} + y{\partial \over \partial y} \Big ) K_N(x,y)  \\
= \sum_{n=0}^{N-1} \Big ( n p_n(x) + n {\mathcal M [w](n+1) \over \mathcal M [w] (n) } p_{n-1}(x) \Big )
\Big (  - {(n+1) \mathcal M[w](n+2) \over  \mathcal M[w](n+1)} q_{n+1}(y) + (n+1) q_n(y) \Big ).
\end{multline}
Simple manipulation reduces this to (\ref{Me4}).

In (\ref{Me4}) scale $x$ and $y$ by writing as $xt$ and $yt$ respectively. The LHS of (\ref{Me4}) can then
be written
\begin{equation}\label{2.17a}
{d \over dt} t K_N(tx,ty) = -  N { \mathcal M[w](N+1)  \over  \mathcal M[w](N)}  p_{N-1}(tx) q_N(ty).
 \end{equation}
Integrating both sides from $0$ to $1$, on the LHS noting $\lim_{t \to 0^+} t K_N(tx,ty) = 0$
as follows from (\ref{2.9}), reclaims  (\ref{Me3}).
  
\end{proof}

\begin{remark}\label{R2}
We show in Appendix B how (\ref{2.17a}), combined with a recurrence formula of fixed depth of
$t p_{N-1}(t)$ known to hold for a number of the specific P\'olya ensembles considered
in Section \ref{S3}, provides a combinatorial based method to compute the leading
large $N$ form of the moments of the spectral density.
\end{remark} 
  
  \subsection{Asymptotics of ratios of gamma function}\label{S2.3}
  
  The gamma function $\Gamma(z)$ is one of the most commonly occurring of special functions \cite{andrews99}, analytic in the
complex plane except for poles at 0 and the negative integers.
Since $\Gamma(z+1) = z \Gamma(z)$ and $\Gamma(1) = 1$, for $n$ a non-negative integer
\begin{equation}\label{0.1} 
 \Gamma(n+1) = n!,
 \end{equation}
 and so gives meaning to the factorial for
general complex $n$. Historically \cite{Pe24} Stirling's formula for the gamma function is the large $n$ approximation to the
factorial $n! \approx \sqrt{2 \pi } n^{n + 1/2} e^{-n}$, later extended to the asymptotic series \cite{WW65}
\begin{equation}
n! =  \sqrt{2 \pi n}  \Big ( {n \over e} \Big )^n \bigg ( 1 + {1 \over 12 n} + {1 \over 288 n^2} + O \Big ( {1 \over n^3} \Big ) \bigg ).
\end{equation}
Using (\ref{0.1}) and
truncating this asymptotic series at $O(1/n)$ leads to the large $|z|$ asymptotic expansion \cite{tricomi51}
\begin{align}\label{eq1}
\frac{\Gamma(z+a)}{\Gamma(z+b)}=z^{a-b}\left(
1+\frac{1}{2z}(a-b)(a+b-1)+O(z^{-2})
\right),\quad |z|\to\infty
\end{align}
valid for $|{\rm arg} \, z|  < \pi$ and $a,b$ fixed.  Furthermore,
specify $(u)_\alpha := \Gamma(u+\alpha)/\Gamma(u)$, which for $\alpha$ a positive integer corresponds to the
product $(u)_\alpha  = (u) (u+1) \cdots (u+ \alpha - 1)$. From this definition, and under the assumption that
$\alpha$ is a positive integer, we see
\begin{align}\label{eq2}
(-N+k)_{\alpha}=   (-1)^\alpha {\Gamma(N - k + 1) \over \Gamma(N - k + 1 - \alpha) } = (-N)^{\alpha}\left(
1-\frac{\alpha(2k+\alpha-1)}{2N}+O(N^{-2})
\right),\quad N\to\infty,
\end{align}
where the large $N$ form follows from (\ref{eq1}).
Our analysis of the rate of convergence for hard edge scalings will have use for both
 (\ref{eq1}) and (\ref{eq2}).
 
 \section{Hard edge scaling to $O(1/N)$ for some P\'olya ensembles}\label{S3}
 \subsection{Products of Laguerre ensembles}\label{S3.1}
The realisation of the Laguerre unitary ensemble with $a = n - N$ noted below (\ref{Lw}) can equivalently be expressed
as  being realised by the squared singular values of an $n \times N$ standard complex Gaussian matrix. A natural
generalisation, first considered in \cite{AKW13,AIK13}, is to consider the squared singular values of the product
of say $M$ rectangular standard complex Gaussian matrices (assumed to be of compatible sizes). Since each ensemble
in the product is individually a P\'olya ensemble, the closure property
of P\'olya ensembles under multiplicative convolution from \cite{KK16} tells us that the product ensemble can be formed by simply replacing $w(x)$
in (\ref{2.3}) by 
\begin{equation}\label{W1}
w^{(M)}(x) := \int_0^\infty dx_1 \cdots dx_M \, \delta  \Big ( x - \prod_{j=1}^M x_j \Big ) \prod_{l=1}^M w_l(x_l), \quad w_j(x) = {1 \over \Gamma(a_j + 1)}  x^{a_j} e^{-x} .
\end{equation}
For the Mellin transform we have the factorised gamma function evaluation
\begin{equation}\label{W2}
\mathcal M[w^{(M)}](s) = \prod_{j=1}^M {\Gamma(a_j + s) \over  \Gamma(a_j + 1)}.
\end{equation}
The formula for the inverse Mellin transform then gives
\begin{align}\label{W3}
w^{(M)}(x)  & =  \Big ( \prod_{j=1}^M {1 \over  \Gamma(a_j + 1)} \Big )   {1 \over 2 \pi i }  \int_{c - i \infty}^{c + i \infty}  \prod_{j=1}^M \Gamma(a_j - s)  \, x^{s} \, ds \nonumber \\
& =   \prod_{j=1}^M {1 \over  \Gamma(a_j + 1)}  \,
\MeijerG{M}{0}{0}{M}{-}{a_1,\ldots,a_M}{x}.
\end{align}
 Here $c$ is any positive real number, and $G^{0,M}_{M,0}$ denotes a particular Meijer G-function; see \cite{MSH09}.
 
 Substituting (\ref{W2}) in (\ref{Me1}) and (\ref{W3}) in (\ref{Me2}) shows \cite{AIK13}
 \begin{align}\label{W4}
 p_n(x) & =   (-1)^n n! \prod_{j=1}^M \Gamma(a_j + n + 1) \sum_{j=0}^n { (-x)^j \over j! (n - j)! \prod_{l=1}^M (a_l +  1)_j}  \nonumber\\
& = 
 (-1)^n \prod_{j=1}^M {\Gamma(a_j + n + 1) \over \Gamma(a_j + 1)} \,
 {}_1 F_M \bigg ( \begin{array}{cc} -n \\ a_1 + 1,\dots, a_M + 1 \end{array} \Big | x \bigg ), 
 \end{align}
 with $ {}_1 F_M$ the notation for the particular hypergeometric series, and
\begin{align}\label{W5}
 q_n(x) & =   {(-1)^n \over n!}  \prod_{j=1}^M {1 \over \Gamma(a_j + n + 1)}  {1 \over 2 \pi i }
   \int_{c - i \infty}^{c + i \infty} {\Gamma(n+s+1) \over \Gamma(s+1) }  \prod_{j=1}^M \Gamma(a_j - s)  \, x^{s} \, ds \nonumber \\
& =  {(-1)^n \over n!}  \prod_{j=1}^M {1 \over  \Gamma(a_j +n+ 1)} \,
\MeijerG{M}{1}{1}{M+1}{-n}{a_1,\ldots,a_M,0}{x}.
 \end{align}
 
 According to (\ref{Me4}) and (\ref{Me3}), $K_N(x,y)$ is fully determined by $p_{N-1}(x)$ and $q_N(y)$. Since our aim is to
 expand $K_N(x,y)$ for large $N$ with hard edge scaled variables, it suffices then to compute the hard edge expansion
 of these particular biorthogonal functions.
 
 \begin{proposition}\label{P3.1}
 Denote
\begin{equation}\label{W6}
{}_0 F_M  \bigg ( \begin{array}{cc} -  \\ a_1 + 1,\dots, a_M + 1 \end{array} \Big | -x \bigg ) = \sum_{j=0}^\infty {(-x)^j \over j! \prod_{s=1}^M (a_s + 1)_j},
\end{equation}
as conforms with standard notation in the theory of hypergeometric functions. We have
\begin{multline}\label{W7}
 {}_1 F_M \bigg ( \begin{array}{cc} -N + 1 \\ a_1 + 1,\dots, a_M + 1 \end{array} \Big | {x \over N} \bigg ) \\ 
 = \bigg ( 1 - {1 \over 2 N} \Big ( x {d \over dx} +
 \Big (  x {d \over dx} \Big )^2 \Big ) \bigg ) \, {}_0 F_M  \bigg ( \begin{array}{cc} -  \\ a_1 + 1,\dots, a_M + 1 \end{array} \Big | -x \bigg ) + O \Big ( {1 \over N^2} \Big ).
 \end{multline}
 Also
 \begin{multline}\label{W8}
 {1 \over N!} \MeijerG{M}{1}{1}{M+1}{-N}{a_1,\ldots,a_M,0}{{x \over N}} \\
 = \bigg ( 1 + {1 \over 2 N} \Big ( x {d \over dx} +
 \Big (  x {d \over dx} \Big )^2 \Big ) \bigg )  \MeijerG{M}{0}{1}{M+1}{-}{a_1,\ldots,a_M,0}{{x }} + O \Big ( {1 \over N^2} \Big ).
 \end{multline} 
 In both (\ref{W7}) and (\ref{W8}) the bound on the remainder holds uniformly for $x \in [0,s]$, for any fixed $s \in \mathbb R_+$.
 \end{proposition}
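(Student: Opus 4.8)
The plan is to prove both expansions by the same two-step mechanism: expand the $N$-dependent coefficient (a ratio of gamma functions) to first order in $1/N$ using the asymptotics recorded in Section~\ref{S2.3}, and then recognise the resulting $O(1/N)$ correction as the operator $x\,d/dx+(x\,d/dx)^2$ acting on the limiting function. The only difference between the two cases is the representation used: for \eqref{W7} it is cleanest to work directly with the terminating power series, while for \eqref{W8} the Mellin--Barnes integral inherited from \eqref{W5} is the natural object.

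For \eqref{W7}, I would write the left-hand side as the series
\[
\sum_{j\ge 0}\frac{(-N+1)_j}{N^j}\,\frac{x^j}{j!\,\prod_{s=1}^M(a_s+1)_j},
\]
observing that $(-N+1)_j$ vanishes for $j\ge N$ so the sum terminates. The key input is \eqref{eq2} with $k=1$ and $\alpha=j$, giving $(-N+1)_j/N^j=(-1)^j\bigl(1-j(j+1)/(2N)+O(N^{-2})\bigr)$. Substituting this and separating orders, the leading term is exactly the series \eqref{W6} for ${}_0F_M$, while the $O(1/N)$ term is $-\tfrac{1}{2N}\sum_{j\ge 0}j(j+1)\,(-x)^j/\bigl(j!\,\prod_s(a_s+1)_j\bigr)$. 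Since $x\,d/dx$ multiplies the coefficient of $x^j$ by $j$, one has $\bigl(x\,d/dx+(x\,d/dx)^2\bigr)x^j=j(j+1)x^j$, so this correction is precisely $-\tfrac{1}{2N}\bigl(x\,d/dx+(x\,d/dx)^2\bigr)$ applied to \eqref{W6}, which is the claim.

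For \eqref{W8} I would use the Mellin--Barnes form from \eqref{W5},
\[
\frac{1}{N!}\,\MeijerG{M}{1}{1}{M+1}{-N}{a_1,\ldots,a_M,0}{{x/N}}=\frac{1}{2\pi i}\int_{c-i\infty}^{c+i\infty}\frac{\Gamma(N+s+1)}{N!\,N^{s}}\,\frac{\prod_{j=1}^M\Gamma(a_j-s)}{\Gamma(s+1)}\,x^{s}\,ds,
\]
and expand the prefactor using \eqref{eq1} with $z=N$, $a=s+1$, $b=1$, which yields $\Gamma(N+s+1)/(N!\,N^{s})=1+s(s+1)/(2N)+O(N^{-2})$. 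The leading term reproduces the integral defining the $G$-function on the right-hand side of \eqref{W8}. For the correction I would use that multiplication by $s$ under the integral corresponds to $x\,d/dx$, because $x\,d/dx\,x^{s}=s\,x^{s}$; thus inserting $s(s+1)/(2N)$ produces exactly $\tfrac{1}{2N}\bigl(x\,d/dx+(x\,d/dx)^2\bigr)$ acting on that limiting $G$-function, which is \eqref{W8}.

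The main obstacle in both parts is the uniform control of the remainder for $x\in[0,s]$, since \eqref{eq1} and \eqref{eq2} are stated for fixed parameters whereas here the relevant parameter ($j$ in the series, $|s|$ along the contour) is unbounded. For \eqref{W7} the next correction to $(-N+1)_j/N^j$ is $O(j^4/N^2)$ while the true ratio is bounded by $1$ in modulus and the coefficients $x^j/\bigl(j!\,\prod_s(a_s+1)_j\bigr)$ decay faster than any power of $j$ uniformly on $[0,s]$; hence the tail converges and contributes $O(1/N^2)$, and the terms with $j$ comparable to $N$ are negligible for the same reason. For \eqref{W8} one must verify that the $O(N^{-2})$ error from the gamma-ratio expansion, once multiplied by $\prod_j\Gamma(a_j-s)/\Gamma(s+1)$ and integrated along the vertical contour, stays absolutely convergent and $O(1/N^2)$ uniformly in $x$; this follows from the rapid decay of the integrand in the imaginary direction, controlled by the $1/\Gamma(s+1)$ factor, together with a uniform-in-$s$ version of the Stirling estimate underlying \eqref{eq1}.
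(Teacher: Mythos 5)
Your proposal is correct and follows essentially the same route as the paper's own proof: the same terminating series and Mellin--Barnes representations, the same expansions of $(-N+1)_j/N^j$ and $\Gamma(N+s+1)/(N!\,N^s)$ via (\ref{eq1})--(\ref{eq2}), and the same identification of the factors $j(j+1)$, respectively $s(s+1)$, with the action of $x\,\frac{d}{dx}+\bigl(x\,\frac{d}{dx}\bigr)^2$, with your uniformity discussion being a slightly more detailed version of the paper's brief remarks. One small correction to your final paragraph: along vertical lines $1/\Gamma(s+1)$ \emph{grows} like $e^{\pi|\mathrm{Im}\,s|/2}$, so the decay justifying the contour estimate comes from the factors $\prod_{j}\Gamma(a_j-s)$ (together with taking the abscissa $c$ large enough to handle the inserted polynomial $s(s+1)$), not from $1/\Gamma(s+1)$.
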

 
 \begin{proof}
 In the summation (\ref{W4}) defining  the LHS of (\ref{W7}) the only $N$ dependence is the factor
 $$
 {(-N + 1)_j \over N^j} = (-1)^j \bigg ( 1 -\frac{j(j+1)}{2N} +  O \Big ( {1 \over N^2}  \Big )\bigg ),
 $$
 where the expansion follows from (\ref{eq2}). This result, valid for fixed $j$, can nonetheless be substituted in
 the summation since the factor in the summand $(-N+1)_j/j! N^j$ is a rapidly decaying function of $j$. Doing this shows
 \begin{multline*}
  \sum_{j=0}^\infty {(-x)^j \over j! \prod_{s=1}^M (a_s + 1)_j}  \bigg ( 1 - \frac{j(j+1)}{2N}  +  O \Big ( {1 \over N^2}  \Big ) \bigg ) \\ =
\bigg (   1 - {1 \over 2 N} \Big ( x {d \over dx} + \Big ( x {d \over dx} \Big )^2 \Big ) \bigg ) \, {}_0 F_M  \bigg ( \begin{array}{cc} -  \\ a_1 + 1,\dots, a_M + 1 \end{array} \Big | -x \bigg )  +  O \Big ( {1 \over N^2}  \Big ),
\end{multline*}
with the bound on the RHS uniform for $x \in [0,s]$.

In relation to (\ref{W8}), after multiplying through the prefactor $1/ N!$ inside the integrand of the integral (\ref{W5}) defining the LHS,  we see
the only dependence on $N$ is the factor
$$
{\Gamma(N + s +1) \over N^s \Gamma(N+1)} = 1 + {s(s+1) \over 2N}  + O \Big ( {1 \over N^2}  \Big ),
$$
 where the expansion follows from (\ref{eq2}). The result (\ref{W8}) now follows by noting
  \begin{multline*}
 {1 \over 2 \pi i }   \int_{c - i \infty}^{c + i \infty} {1 \over \Gamma(s+1) }  \prod_{j=1}^M \Gamma(a_j - s) \bigg (  1 + {s(s+1) \over 2N}  +  O \Big ( {1 \over N^2}  \Big )  \bigg )  \, x^{s} \, ds  
   \\ = \bigg ( 1 + {1 \over 2 N} \Big ( x {d \over dx} +
 \Big (  x {d \over dx} \Big )^2 \Big ) \bigg )  \MeijerG{M}{0}{1}{M+1}{-}{a_1,\ldots,a_M,0}{x } +  O \Big ( {1 \over N^2}  \Big ),
 \end{multline*}
 and arguing in relation to the error term as above.
\end{proof}  

Substituting the results of Proposition \ref{P3.1} in (\ref{W4}) with $n=N-1$ and in (\ref{W5}) with $n=N$, then substituting in (\ref{Me3}) shows
\begin{multline}\label{FG}
{1 \over N} K_N(x/N,y/N) \\
= \int_0^1 \bigg ( 1 - {1 \over 2 N} \Big ( x {d \over dx} +
 \Big (  x {d \over dx} \Big )^2 \Big ) \bigg ) F(xt)
 \bigg ( 1 + {1 \over 2 N} \Big ( y {d \over dy} +
 \Big (  y {d \over dy} \Big )^2 \Big ) \bigg ) G(yt) \, dt +   O \Big ( {1 \over N^2}  \Big ),
 \end{multline}
 where $F$ denotes the function $ {}_0 F_M$ in (\ref{W7}) and $G$ denotes the function $G_{1,M+1}^{M,0}$ in (\ref{W8}).
 Note that the error bound from asymptotic forms in Proposition \ref{P3.1} persist because the error bounds therein
 are uniform with respect to $x,y$ when these variables are restricted to a compact set; see \cite{Bo16} on this
 point in relation to (\ref{pk5}).
 
 Independent of the details of these functions, the structure (\ref{FG}) permits simplification.
 
 \begin{proposition}\label{P3.2} 
 The expression (\ref{FG}) has the simpler form
 \begin{equation}\label{FG1}
{1 \over N} K_N(x/N,y/N) 
=  \int_0^1  F(xt)  G(yt) \, dt  - {1 \over 2N} \Big ( x {\partial \over \partial x} -  y {\partial \over \partial y} \Big ) F(x) G(y) +   O \Big ( {1 \over N^2}  \Big ).
 \end{equation}
 \end{proposition}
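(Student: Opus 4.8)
The plan is to expand the integrand of (\ref{FG}) in powers of $1/N$ and to show that the resulting $O(1/N)$ correction---which at first sight is an integral over $[0,1]$ of a combination of $F(xt)$, $G(yt)$ and their derivatives---collapses to the single boundary contribution at $t=1$ displayed in (\ref{FG1}). First I would multiply out the two bracketed operators in (\ref{FG}); the term proportional to $1/N^2$ (the product of the two corrections) is absorbed into the error, and since the error bounds of Proposition \ref{P3.1} are uniform on the compact set in question this is legitimate. Abbreviating $D_x = x\,\partial/\partial x$ and $D_y = y\,\partial/\partial y$, this leaves
\begin{multline*}
{1 \over N} K_N(x/N,y/N) = \int_0^1 F(xt) G(yt)\, dt \\
+ {1 \over 2N} \int_0^1 \Big [ F(xt) \big ( D_y + D_y^2 \big ) G(yt) - \big ( ( D_x + D_x^2 ) F(xt) \big ) G(yt) \Big ]\, dt + O\Big ( {1 \over N^2} \Big ),
\end{multline*}
so that the task reduces to evaluating the single integral in the $O(1/N)$ term.

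The key observation I would exploit is that $D_x$ and $D_y$, when acting on functions of the product variables $xt$ and $yt$, may be traded for the Euler operator $t\,d/dt$ in the integration variable. Setting $f(t) = F(xt)$ and $g(t) = G(yt)$ and regarding $x,y$ as parameters, one has $D_x F(xt) = t f'(t)$ and $(D_x)^2 F(xt) = t\big ( t f'(t) \big )'$, and likewise for $G$, so that $(D_x + D_x^2) F(xt) = 2 t f'(t) + t^2 f''(t)$ and $(D_y + D_y^2) G(yt) = 2 t g'(t) + t^2 g''(t)$. Substituting these, the $O(1/N)$ integrand reorganises into
\begin{equation*}
\int_0^1 \Big [ 2 t \big ( f g' - g f' \big ) + t^2 \big ( f g'' - g f'' \big ) \Big ]\, dt .
\end{equation*}
Writing $W(t) = f(t) g'(t) - g(t) f'(t)$ for the parameter-dependent Wronskian, so that $W' = f g'' - g f''$, the integrand is precisely $2 t W + t^2 W' = {d \over dt}\big ( t^2 W(t) \big )$, a total derivative.

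Finally I would integrate, using that $t^2 W(t) \to 0$ as $t \to 0^+$ (the functions $F$, $G$ being smooth at the origin, $W$ stays bounded there), to obtain the boundary value $W(1)$. Since $f(1) = F(x)$, $f'(1) = x F'(x)$, $g(1) = G(y)$ and $g'(1) = y G'(y)$, this gives $W(1) = y F(x) G'(y) - x F'(x) G(y)$, which is exactly $-\big ( x\,\partial/\partial x - y\,\partial/\partial y \big ) F(x) G(y)$; restoring the prefactor $1/2N$ then reproduces (\ref{FG1}). I expect the only genuine subtlety to be the trading of $D_x, D_y$ for $t\,d/dt$ and the attendant interchange of differentiation with integration, together with the check that the $t\to 0^+$ boundary term vanishes; once these are in place the remaining steps are the elementary recognition of the total derivative and the evaluation of $W(1)$.
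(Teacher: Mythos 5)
Your proof is correct and follows essentially the same route as the paper's: the paper likewise trades the operators $x\,\partial/\partial x$, $y\,\partial/\partial y$ acting on $F(xt)$, $G(yt)$ for $t$-derivatives and then performs one integration by parts per second-derivative term, your Wronskian/total-derivative packaging being just a tidier organisation of that same manipulation. One caveat on your parenthetical justification of the $t\to 0^+$ boundary term: $G$ is in general not smooth (nor even bounded) at the origin --- e.g.\ for $M=1$ it equals $x^a F(x)$ with possibly $-1<a<0$ --- but $t^2 W(t)\to 0$ still holds because each $a_j>-1$, so the step itself stands.
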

 
 \begin{proof}
 At order $1/N$ the RHS of (\ref{FG}) reads
 $$
 - {1 \over 2 N}   \int_0^1 G(yt)  \Big (  x {d \over dx} + \Big ( x {d \over dx} \Big )^2 \Big ) F(xt) \, dt 
 + {1 \over 2 N}   \int_0^1 F(xt)  \Big (  y {d \over dy} + \Big ( y {d \over dy} \Big )^2 \Big ) G(yt) \, dt.
 $$
 In this expression, both the derivatives with respect to $x$, and the derivatives with respect to $y$ can
 be replaced by derivatives with respect to $t$. Performing one integration by parts for each of
 the terms involving the second derivative, (\ref{FG1}) results.
 \end{proof}
 
 Recalling (\ref{1.5a}), we see from (\ref{FG1}) that in general for  products of Laguerre unitary ensembles, the pointwise rate of
convergence to the hard edge limiting $k$-point correlation is $O(1/N)$. On the other hand, as noted in the text around
(\ref{pk6}), earlier works
\cite{EGP16,Bo16,PS16,HHN16,FT19} have demonstrated that for the Laguerre unitary ensemble itself (the case $M = 1$),
with the hard edge scaling variables as used in (\ref{FG1}), and with the Laguerre parameter $a=0$, the convergence
rate is actually $O(1/N^2)$. Moreover, these same references found that the  $O(1/N^2)$ rate holds for general
Laguerre parameter $a>-1$ if each $N$ on the LHS of (\ref{FG1}) is replaced by $N+a/2$.

From the viewpoint of (\ref{FG1}), the special feature of the case $M=1$ is that then $F$ and $G$ are related
by 
 \begin{equation}\label{FGa}
G(x) = x^a F(x), 
 \end{equation}
as follows from the final paragraph of Section \ref{S2.1}. The term $O(1/N)$ in (\ref{FG1})
can therefore be written to involve only $F$,
 \begin{equation}\label{FG2}
 - {1 \over 2N} y^a   \Big ( -a F(x) F(y) +  \Big ( x {\partial \over \partial x} -  y {\partial \over \partial y} \Big ) F(x) F(y)  \Big )  \Big |_{M=1} .
 \end{equation}
 Substituting in (\ref{FG1}), then substituting the result in (\ref{1.5a}), we factor $x_l$ from each column to
 effectively remove $y^a$ from (\ref{FG2}). The term involving partial derivatives in the latter is then antisymmetric,
 and so does not contribute to an expansion of the determinant at order $1/N$, telling us that
  \begin{align}\label{FG3}
 &{1 \over N^k}  \rho_{(k)}\Big ( {x_1 \over N} ,\dots, {x_k \over N} \Big ) \Big |_{M=1} 
\nonumber  \\
 & \quad = \prod_{l=1}^k x_l^a \det \Big [ \Big ( \int_0^1 t^a F(x_jt) F(x_l t) \, dt + {a \over 2N}F(x_j) F(x_l) \Big )  \Big |_{M=1}  \Big ]_{j,l=1}^k
  + O \Big ( {1 \over N^2}  \Big ) \nonumber \\
 & \quad =  \det \Big [ \Big ( \int_0^1 \tilde{F}(x_jt) \tilde{F}(x_l t) \, dt + {a \over 2N}\tilde{F}(x_j) \tilde{F}(x_l) \Big )  \Big |_{M=1}  \Big ]_{j,l=1}^k
  + O \Big ( {1 \over N^2}  \Big ), 
   \end{align}
   where $\tilde{F}(x) = x^{a/2} F(x)$, and the second equality follows from the first by multiplying each row $j$ by $x_j^{a/2}$ and each column
   $k$ by $x_k^{a/2}$. In this latter form the kernel is symmetric. Comparison with (\ref{1.2}) and (\ref{1.3}) then
   shows
   $$
   \tilde{F}(x)  \Big |_{M=1} = J_a( \sqrt{4x} ), \qquad \int_0^1 \tilde{F}(xt) \tilde{F}(yt) \, dt \Big |_{M=1} = 4 K^{\rm hard}(4x,4y)
   $$
   (the reason for the factors of 4 comes from the choice of hard edge scaling $x \mapsto x/4N$ in  (\ref{1.2}), (\ref{1.3})
   rather than $x \mapsto x/N$ as in (\ref{FG3})). This is in agreement with the references cited above relating
   to the hard edge expansion of the Laguerre unitary ensemble correlation kernel up to and including the
   $O(1/N)$ term, and so has the property
   that upon replacing $N$ by $N+a/2$ on the LHS, the convergence has the optimal rate of $O(1/N^2)$.
   
   \subsection{Laguerre Muttalib-Borodin model}\label{S3.2}
   The Laguerre Muttalib-Borodin model \cite{Mu95,Bo98,FW15,Zh15}, defined as the eigenvalue PDF
   proportional to
   \begin{equation}\label{MB1}
   \prod_{l=1}^N x_l^a e^{-x_l}  \prod_{1 \le j < k \le N} (x_j - x_k ) (x_j^\theta - x_k^\theta),
    \end{equation}
   with each $x_l$ positive is, with $\theta = M$ and upon the change of variables $x_l \mapsto x_l^{1/\theta}$,
   known to be closely related to the product of $M$ matrices from the LUE. Specifically, there is a choice
   of the Laguerre parameters $a_l$ for which the joint PDF of the latter reduces to this transformation of 
   (\ref{MB1}) \cite{KS14}. In particular, it follows that in the case $\theta = M$ at least, (\ref{MB1}) corresponds to a
   P\'olya ensemble.
   In fact it is known from \cite{KK16} that (\ref{MB1}) is an example of a P\'olya ensemble for general $\theta > 0$.
  We can thus make use of the theory of Section \ref{S2.2} to study the hard edge expansion of the correlation
  kernel.
  
  The normalised weight function corresponding to (\ref{MB1}) after the stated  change of variables is
  \begin{equation}\label{MB2} 
  w^{({\rm MB}, L)}(x) = {1 \over \theta \Gamma(a+1)} x^{-1 + (a+1)/\theta} e^{- x^{1/\theta}},
  \end{equation}
  which has Mellin transform
    \begin{equation}\label{MB3} 
    {\mathcal M} [ w^{({\rm MB}, L)} ](s) = {\Gamma(\theta(s-1) + a + 1) \over \Gamma (a + 1) }.
 \end{equation}
 Hence the polynomials $p_n(x)$ in (\ref{Me1}) read
   \begin{equation}\label{MB4}     
   p_n^{({\rm MB}, L)}(x) = (-1)^n \Gamma(\theta n + a + 1) \sum_{j=1}^n { (-n)_j x^j \over j! \Gamma(\theta j + a + 1)},
 \end{equation}
 first identified in the work of Konhauser \cite{Ko67}.
 
 Taking the inverse Mellin transform of (\ref{MB3}) gives the integral form of the weight,
 $$
  w^{({\rm MB}, L)}(x) = {1 \over  \Gamma(a+1)}  {1 \over  2 \pi i} \int_{c - i \theta}^{c + i \theta} \Gamma(-\theta (s + 1) + a + 1) x^s \, ds ,
  $$
  valid for $c > 0$. Using this in (\ref{Me2}) shows
    \begin{equation}\label{MB5} 
    q_n^{({\rm MB}, L)}(x)  = {(-1)^n \over n! \Gamma(\theta n + a + 1)} 
     {1 \over  2 \pi i }   \int_{c - i \theta}^{c + i \theta} { \Gamma(s+n+1) \over \Gamma(s) } \Gamma(-\theta (s + 1) + a + 1) x^s \, ds.
  \end{equation}    
    
 The dependence on $n$ in the summand of (\ref{MB4}) and integrand of (\ref{MB5})  is precisely the same as in 
 (\ref{W4}) and (\ref{W5}) respectively. Applying the working of Proposition \ref{P3.1} then gives hard
 edge asymptotics that is structurally identical to $p_n(x)$ and $q_n(x)$ for products of Laguerre ensembles.
 From this we conclude a formula structurally identical to (\ref{FG1}) for the hard edge asymptotics of the kernel.
 
 \begin{proposition}
 Define
 $$
 \tilde{p}_n^{({\rm MB}, L)}(x) = {(-1)^n \over \Gamma(\theta n + a + 1)} p_n^{({\rm MB}, L)}(x), \qquad
  \tilde{q}_n^{({\rm MB}, L)}(x) = {(-1)^n  \Gamma(\theta n + a + 1)} q_n^{({\rm MB}, L)}(x).
  $$
  Also define
  $$
  F^{({\rm MB},L)}(x) = \sum_{j=0}^\infty {x^j \over j! \Gamma(\theta j + a + 1)}, \qquad
 G^{({\rm MB},L)}(x) =   {1 \over 2 \pi i}    \int_{c - i \theta}^{c + i \theta} { \Gamma(-\theta (s + 1) + a + 1) \over \Gamma (s) } x^s \, ds.
 $$
 We have
 \begin{align*}
 \tilde{p}_{N-1}^{({\rm MB}, L)}(x/N) & = \bigg ( 1 - {1 \over 2 N} \Big ( x {d \over dx} + \Big (  x {d \over dx} \Big )^2 \Big )   \bigg ) F^{({\rm MB},L)}(x)  + O \Big ( {1 \over N^2} \Big ) \\
  \tilde{q}_N^{({\rm MB}, L)}(x/N) & = \bigg ( 1 + {1 \over 2 N} \Big ( x {d \over dx} + \Big (  x {d \over dx} \Big )^2 \Big )  \bigg ) G^{({\rm MB},L)}(x)  + O \Big ( {1 \over N^2} \Big ) ,
  \end{align*}
  and furthermore
  \begin{multline*}
  {1 \over N} K_N^{({\rm MB}, L)}(x/N,y/N) \\
=  \int_0^1  F^{({\rm MB}, L)}(xt)  G^{({\rm MB}, L)}(yt) \, dt  - {1 \over 2N} \Big ( x {\partial \over \partial x} -  y {\partial \over \partial y} \Big ) F^{({\rm MB}, L)}(x) G^{({\rm MB}, L)}(y) +   O \Big ( {1 \over N^2}  \Big ).
\end{multline*}
  \end{proposition}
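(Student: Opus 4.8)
The plan is to capitalise on the remark preceding the statement: once the $n$-dependent prefactors are removed, $\tilde p_n^{({\rm MB},L)}$ and $\tilde q_n^{({\rm MB},L)}$ carry their $n$-dependence in exactly the manner of the product-Laguerre functions (\ref{W4}) and (\ref{W5}). First I would substitute (\ref{MB4}) into the definition of $\tilde p_n^{({\rm MB},L)}$, whereupon the factor $(-1)^n \Gamma(\theta n + a + 1)$ cancels and one is left with the Konhauser sum $\sum_j (-n)_j x^j/(j!\,\Gamma(\theta j + a + 1))$; likewise substituting (\ref{MB5}) into $\tilde q_n^{({\rm MB},L)}$ cancels the prefactor and leaves the Mellin--Barnes integral with integrand proportional to $\big(\Gamma(s+n+1)/n!\big)\,\Gamma(-\theta(s+1)+a+1)\,x^s/\Gamma(s)$. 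In both cases the sole dependence on $n$ sits in the factor $(-n)_j$ and in $\Gamma(s+n+1)/n!$ respectively, and after the substitutions $n = N-1,\ x \mapsto x/N$ and $n = N,\ x \mapsto x/N$ these become the very combinations $(-N+1)_j/N^j$ and $\Gamma(N+s+1)/(N^s\Gamma(N+1))$ whose expansions were carried out in the proof of Proposition \ref{P3.1} by means of (\ref{eq2}).

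Given this identification, the two asymptotic expansions follow by transcription of the proof of Proposition \ref{P3.1}. For $\tilde p$ one inserts $(-N+1)_j/N^j = (-1)^j\big(1 - j(j+1)/(2N) + O(N^{-2})\big)$ into the sum and uses $j(j+1)x^j = \big(x\,d/dx + (x\,d/dx)^2\big)x^j$ to recognise the operator acting on $F^{({\rm MB},L)}$; for $\tilde q$ one inserts $\Gamma(N+s+1)/(N^s\Gamma(N+1)) = 1 + s(s+1)/(2N) + O(N^{-2})$ into the integrand and uses $s(s+1)x^s = \big(x\,d/dx + (x\,d/dx)^2\big)x^s$ to recognise the operator acting on $G^{({\rm MB},L)}$. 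The uniformity of the $O(N^{-2})$ remainder on $x \in [0,s]$ transfers in the same way as there, being controlled by the rapid decay of the summand in $j$ for $\tilde p$ and by the decay of the integrand along the vertical contour for $\tilde q$.

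For the kernel I would begin from the integral representation (\ref{Me3}). Writing $p_{N-1}^{({\rm MB},L)}$ and $q_N^{({\rm MB},L)}$ in terms of their tilded counterparts, the prefactor $-N\,\mathcal M[w^{({\rm MB},L)}](N+1)/\mathcal M[w^{({\rm MB},L)}](N) = -N\,\Gamma(\theta N + a+1)/\Gamma(\theta(N-1)+a+1)$ cancels against the gamma ratios introduced by the tildes, collapsing (\ref{Me3}) to ${1\over N}K_N^{({\rm MB},L)}(x/N,y/N) = \int_0^1 \tilde p_{N-1}^{({\rm MB},L)}(xt/N)\,\tilde q_N^{({\rm MB},L)}(yt/N)\,dt$. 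Inserting the two expansions reproduces, with $F^{({\rm MB},L)}$ and $G^{({\rm MB},L)}$ replacing $F$ and $G$, precisely the integrand structure of (\ref{FG}), and the stated kernel formula then follows directly from Proposition \ref{P3.2}, whose proof (converting the $x$- and $y$-derivatives into $t$-derivatives and integrating by parts the second-derivative terms) uses nothing peculiar to the product-Laguerre $F,G$ and so applies verbatim.

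The main obstacle is not the algebra --- which is a near-verbatim transcription of Section \ref{S3.1} --- but the analytic justification that the fixed-$j$ (respectively fixed-$s$) asymptotics may be moved inside the infinite sum (respectively the Mellin--Barnes integral) while keeping the $O(N^{-2})$ error uniform on compact $x$-intervals. For $\tilde q$ in particular one must check that $\Gamma(N+s+1)/(N^s\Gamma(N+1))$ stays suitably bounded and that $\Gamma(-\theta(s+1)+a+1)/\Gamma(s)$ decays fast enough along the contour to dominate the polynomial-in-$s$ growth of the remainder, so that the tails contribute below the claimed order; once these dominated-convergence bounds are secured --- following the estimates of Bornemann \cite{Bo16} invoked after (\ref{FG}) --- the remainder is bookkeeping.
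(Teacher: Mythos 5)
Your proposal is correct and follows essentially the same route as the paper: the paper offers no separate proof beyond the remark that the $n$-dependence in (\ref{MB4}) and (\ref{MB5}) coincides with that in (\ref{W4}) and (\ref{W5}), so that the working of Propositions \ref{P3.1} and \ref{P3.2} transfers verbatim, and your write-up simply carries out that transfer in detail (including the cancellation of the gamma-ratio prefactor in (\ref{Me3}) and the integration-by-parts step). Your closing paragraph on the uniformity of the remainder addresses exactly the analytic point the paper itself defers to \cite{Bo16}.
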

  
  As in the discussion following Proposition \ref{P3.1}, this tells us that the rate of convergence to the hard edge scaled limit of the
  $k$-point correlation is $O(1/N)$, with the case $\theta = 1$ (corresponding to the LUE) an exception, where by appropriate choice
  of scaling variables, the rate is $O(1/N^2)$.
  
  \subsection{Products of Laguerre ensembles and inverse Laguerre ensembles}\label{S3.3}
  In the guise of the square singular values for the product of complex Gaussian matrices, times the inverse of a further
  product of complex Gaussian matrices, the study of the eigenvalues of a product of Laguerre ensembles and  inverses
  was initiated in \cite{Fo14}. This was put in the context of P\'olya ensembles in \cite{KS14}. Moreover, in the case that there
  are equal numbers of matrices and inverse matrices, such product ensembles can be related to a single weight function,
  as we will now demonstrate.  The essential point
  is that the eigenvalues of $X_{b_1}^{-1} X_{a_1}$, where $X_{a_1}$, $X_{b_1}$  has eigenvalues from the Laguerre unitary ensemble
  has eigenvalue PDF proportional to (see e.g.~\cite[Exercises 3.6 q.3]{Fo10})
  \begin{equation}\label{Kq1}
  \prod_{l=1}^N {x_l^{a_1} \over (1+ x_l)^{b_1 + a_1 + 2N}} \prod_{1 \le j < k \le N} ( x_k - x _j)^2
  \end{equation}
  and that this in turn is an example of a  P\'olya ensemble (\ref{2.3}) with
   \begin{equation}\label{Kq2}
   w^{(\rm I)}(x) = {x^{a_1} \over  (1 + x)^{b_1 + a_1+N+1}} \chi_{x > 0}
    \end{equation} 
    (here the superscript (I) indicates `inverse').
  Structurally, a key distinguishing feature relative to the weight (\ref{Lw}) is that (\ref{Kq2}) depends on $N$.
  After normalising (\ref{Kq2}), proceeding as in the derivation of (\ref{W1}) shows the weight function
  for the P\'olya ensemble of the corresponding product ensemble is
   \begin{equation}\label{Kq3}
   \mathcal M [   w^{({\rm I}, M)}](s)  =  \prod_{l=1}^M {  \Gamma(a_l+s) \Gamma(b_l + N  + 1- s) \over \Gamma(a_l+1) \Gamma(b_l + N) }.
      \end{equation} 
   Use of (\ref{Kq3}) in  (\ref{Me1}) shows
   \begin{multline}\label{Kq4}
{(-1)^n  \over      \prod_{l=1}^M    \Gamma(a_l+n+1) \Gamma(b_l + N  - n ) } 
 p_n^{({\rm I}, M)}(x) 
 = \sum_{j=0}^n  {(-n)_j \over j!}   {x^j \over  \prod_{l=1}^M    \Gamma(a_l+j+1) \Gamma(b_l + N  - j )  } . 
 \end{multline}     
 Further, using  (\ref{Kq3}) to write $w^{(\rm I)}(x) $ as an inverse Mellin transform shows from (\ref{Me2}) that
    \begin{multline}\label{Kq5}
  {(-1)^n    \over  \prod_{l=1}^M    \Gamma(a_l+n+1) \Gamma(b_l + N  - n ) }q_n^{({\rm I}, M)}(x)  \\ = {1 \over 2 \pi i}  {1 \over n!} \int_{c - i \theta}^{c + i \theta} 
   { \Gamma(s+n) \over \Gamma(s) }  \Big (  \prod_{l=1}^M
   \Gamma(a_l-s) \Gamma(b_l + N  +1 + s)  \Big ) x^s \, ds.
 \end{multline}       
  
  Proceeding as in the derivation of Proposition \ref{P3.1}, and making use in particular of the asymptotic formula
  (\ref{eq1}) for the ratio of two gamma functions, the large $N$ forms of (\ref{Kq4}) and (\ref{Kq5}) as relevant to (\ref{Me3})
 can be deduced. This allows for the analogue of (\ref{FG}) to be deduced, which then proceeding as in the derivation of
 Proposition \ref{P3.2} gives the analogue of (\ref{FG1}).
 
 \begin{proposition}\label{P3.4}
 Denote the LHS of (\ref{Kq4}) with $n = N -1$, and multiplied by $\prod_{l=1}^M \Gamma( N + b_l)$, by
 $\tilde{p}_{N-1}^{({\rm I}, M) }(x)$,  and let $F$ be specified as below (\ref{FG}). Also, denote the LHS of
(\ref{Kq5}) with $n = N $, and divided by $\prod_{l=1}^M \Gamma( N + b_l)$, by  $\tilde{q}_{N}^{({\rm I}, M) }(x)$,
and let $G$ be as specified below (\ref{FG}). We have
\begin{equation}
p_{N-1}^{({\rm I}, M)}\Big ( {x \over N^{M+1}} \Big ) 
= \bigg ( 1   - {1 \over 2 N} \bigg (
\Big ( 1 + M - 2 \sum_{l=1}^M b_l  \Big ) x {d \over dx} +
(1 + M) \Big ( x {d \over d x} \Big )^2  + O\Big ( {1 \over N^2} \Big ) \bigg ) \bigg ) F(x),
\end{equation}
\begin{multline}
 {1 \over N^M}  q_{N}^{({\rm I}, M)}\Big ( {x \over N^{M+1}} \Big ) \\
= \bigg ( 1 +  {1 \over N} \sum_{l=1}^M  b_l  +{1 \over 2N}  \bigg (
\Big ( 1 + M + 2 \sum_{l=1}^M b_l  \Big ) x {d \over dx} +
(1 + M) \Big ( x {d \over d x} \Big )^2  + O\Big ( {1 \over N^2} \Big )  \bigg ) \bigg ) G(x)
\end{multline}
and
\begin{multline}\label{3.26}
{1 \over  N^{M+1}}  K_N \Big ( {x \over N^{M+1}}, {y \over N^{M+1}} \Big ) = \int_0^1 F(xt) G(yt) \, dt \\
 - {1 \over 2 N} (1 + M) \Big ( G(y) x {d \over dx} F(x) - F(x) y {d \over d y} G(y) \Big )
+ {1 \over  N} \Big ( \sum_{l=1}^M b_l  \Big ) F(x) G(y) + O \Big ( {1 \over N^2} \Big ).
\end{multline}
\end{proposition}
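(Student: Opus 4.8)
The plan is to follow the two-step template of Propositions \ref{P3.1} and \ref{P3.2}: first obtain the hard edge expansions of the individual biorthogonal functions $p_{N-1}^{({\rm I},M)}$ and $q_N^{({\rm I},M)}$ (in the normalisation of the statement) to order $1/N$, and then assemble the kernel through the structural formula (\ref{Me3}) and simplify by integration by parts. The one genuinely new feature relative to Section \ref{S3.1} is that the weight (\ref{Kq2}) itself depends on $N$; consequently the series (\ref{Kq4}) and the contour integral (\ref{Kq5}) carry the extra gamma factors $\prod_{l}\Gamma(b_l+N-j)$ and $\prod_l\Gamma(b_l+N+1+s)$, whose own $N$-dependence must be expanded alongside the factors already present in the pure Laguerre case.

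For the polynomials I would scale $x\mapsto x/N^{M+1}$ in (\ref{Kq4}) with $n=N-1$ and isolate the $N$-dependent part of the coefficient of $x^j$, namely $(-N+1)_j$, the factor $N^{-(M+1)j}$ produced by the scaling, and $\prod_l\Gamma(b_l+N-j)^{-1}$. Expanding $(-N+1)_j$ by (\ref{eq2}) and each ratio $\Gamma(b_l+N)/\Gamma(b_l+N-j)$ by (\ref{eq1}), one checks that the leading powers of $N$ cancel exactly --- this is what selects the scaling exponent $M+1$ --- leaving $(-1)^j$ times a correction $1-\tfrac{1}{2N}\big[(1+M)j^2+(1+M-2\sum_l b_l)j\big]+O(N^{-2})$. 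Since $x\,\tfrac{d}{dx}$ applied to $x^j$ returns $j\,x^j$, replacing $j\mapsto x\tfrac{d}{dx}$ and $j^2\mapsto(x\tfrac{d}{dx})^2$ turns this correction into the stated differential operator acting on the $j$-independent leading series, which sums to $F$. As in Proposition \ref{P3.1}, the termwise expansion may be substituted under the sum because the summand decays rapidly in $j$, giving a remainder uniform for $x\in[0,s]$.

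For the functions $q_N^{({\rm I},M)}$ I would apply the same scaling to (\ref{Kq5}) with $n=N$ and isolate, inside the integrand, the $N$-dependent factors $\Gamma(s+N)/\Gamma(N+1)$ (the latter from the $1/n!$), $\prod_l\Gamma(b_l+N+1+s)/\prod_l\Gamma(N+b_l)$, and the scaling factor $N^{-(M+1)s}$. Expanding by (\ref{eq1}), the explicit prefactors are arranged so that the leading power of $N$ cancels and the leading term is $G$, with a $1/N$ correction that is a quadratic in $s$ together with an $s$-independent constant. Because the contour kernel carries $x^s$, multiplication of the integrand by $s$ corresponds exactly to the Euler operator $x\tfrac{d}{dx}$; thus the $s$- and $s^2$-parts of the correction become $x\tfrac{d}{dx}$ and $(x\tfrac{d}{dx})^2$ applied to $G$, while the $s$-independent part is precisely the new additive term $\tfrac{1}{N}\sum_l b_l$ that has no counterpart in (\ref{W8}). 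The interchange of the asymptotic expansion with the contour integration is justified, as in Proposition \ref{P3.1}, by the decay of the integrand along the vertical line.

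With both expansions in hand I would substitute $p_{N-1}^{({\rm I},M)}$ and $q_N^{({\rm I},M)}$ into (\ref{Me3}) to obtain an integral over $t\in[0,1]$ of the same shape as (\ref{FG}), and then repeat the simplification of Proposition \ref{P3.2}: under the integral the $x$- and $y$-derivatives are converted into $t$-derivatives, one integration by parts is carried out on each $(x\tfrac{d}{dx})^2$ and $(y\tfrac{d}{dy})^2$ term, and the boundary contributions at $t=1$ produce the first-derivative terms of (\ref{3.26}); the additive constant coming from $q$ survives this step to give the $\tfrac1N\sum_l b_l$ term. The main obstacle is the power-of-$N$ bookkeeping: because the weight depends on $N$, the extra gamma factors must be expanded so that their leading powers cancel against the scaling and the explicit prefactors, while their $1/N$ corrections combine with those of $(-N+1)_j$ and $\Gamma(s+N)$ to yield exactly the coefficients $(1+M)$, $1+M\mp 2\sum_l b_l$ and the additive $\tfrac1N\sum_l b_l$. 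Keeping these constants correct, and controlling the remainder uniformly on compact sets, is the delicate part; once that is done, the integration-by-parts step is routine.
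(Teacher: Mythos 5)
Your proposal is correct and takes essentially the same approach as the paper: the paper's own justification of Proposition \ref{P3.4} is precisely the sketch you give, namely expand (\ref{Kq4}) and (\ref{Kq5}) under the scaling $x\mapsto x/N^{M+1}$ using (\ref{eq1}) and (\ref{eq2}) termwise (as in Proposition \ref{P3.1}), convert the resulting $j$- and $s$-polynomial corrections into Euler operators acting on $F$ and $G$, substitute into (\ref{Me3}), and integrate by parts as in Proposition \ref{P3.2}, with your explicit coefficient check $1-\tfrac{1}{2N}\big[(1+M)j^2+(1+M-2\sum_l b_l)j\big]$ for the polynomial part agreeing with the stated result.
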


The expansion (\ref{3.26}) shows that in general the leading correction to the 
hard edge scaled limit of the $k$-point correlation
in the case of $M$ products of random matrices formed from the multiplication of a
Laguerre unitary ensemble and inverse Laguerre unitary ensemble is $O(1/N)$.
However, as for products studied in Section \ref{S3.1}, the case $M=1$ is special,
as then the relation (\ref{FGa}) between $F$ and $G$ holds. The $O(1/N)$ term in
(\ref{3.26}) the simplifies to read
\begin{equation}\label{3.27}
{1 \over N} y^a \bigg ( (a_1 + b_1)  F(x) F(y)  - \Big ( x {\partial \over \partial x} - y {\partial \over \partial y} \Big ) F(x) F(y) \bigg ) \bigg |_{M=1}
\end{equation}
Proceeding now as in the derivation of (\ref{FG3}), and with the same meaning of $\tilde{F}$ used
therein, we thus have
 \begin{align}\label{FG3a}
 &{1 \over N^{2k}}  \rho_{(k)}\Big ( {x_1 \over N^2} ,\dots, {x_k \over N^2} \Big ) \Big |_{M=1} 
\nonumber  \\
 & \quad = \prod_{l=1}^k x_l^{a_1} \det \Big [ \Big ( \int_0^1 t^{a_1} F(x_jt) F(x_l t) \, dt + {a_1 + b_1  \over N}F(x_j) F(x_l) \Big )  \Big |_{M=1}  \Big ]_{j,l=1}^k
  + O \Big ( {1 \over N^2}  \Big ) \nonumber \\
 & \quad =  \det \Big [ \Big ( \int_0^1 \tilde{F}(x_jt) \tilde{F}(x_l t) \, dt + {a_1 + b_1  \over N}\tilde{F}(x_j) \tilde{F}(x_l) \Big )  \Big |_{M=1}  \Big ]_{j,l=1}^k
  + O \Big ( {1 \over N^2}  \Big ).
   \end{align}
  As in the discussion below (\ref{FG3}), it follows that if on the LHS $N$ is replaced by $N + (a_1 + b_1)/2$, the convergence to
  the hard edge limit has the optimal rate of $O(1/N^2)$.
  
  \begin{remark}
  1.~Changing variables $x_l = y_l/(1 - y_l)$, $0 < y_l < 1$ in (\ref{Kq1}) gives the functional form
  \begin{equation}\label{Ja}
  \prod_{l=1}^N y_l^{a_1} ( 1 - y_l)^{b_1} \prod_{1 \le j < k \le N} (y_k - y_j)^2,
  \end{equation}
  which up to proportionality is the eigenvalue PDF for the Jacobi unitary ensemble. In the recent work
  \cite{MMM19} the corrections to the hard edge scaled limit of the distribution of the smallest
  eigenvalue have been analysed, with results obtained consistent with (\ref{FGa}). In Appendix
  A we present a large $N$ analysis of this distribution for the Jacobi $\beta$-ensemble (the Jacobi unitary ensemble
  is the case $\beta = 2$) for general even $\beta$. \\
  2.~The case $b_1 = 0$ of the Jacobi unitary ensemble is closely related to the Cauchy two-matrix
  model \cite{BGS14}. The latter is determinantal, but since the PDF consists of two-components, 
  the determinant has a block structure. Nonetheless, each block can be expressed in terms of just
  a single correlation kernel. The hard edge scaling of the latter has been undertaken in \cite{BGS14},
  with a result analogous to (\ref{FG3a}) with $b_1 = 0$ obtained. Closely related to the Cauchy two-matrix
  matrix model is the Bures ensemble, as first observed in \cite{BGS09}, and further developed
  in \cite{FK16}, with a Muttalib-Borodin type extension given in \cite{FL19}. Since the elements of the
  correlation kernel for the Bures ensemble (which is a Pfaffian point process)  are given in terms of
  the correlation kernel for the Cauchy two-matrix matrix model, it follows that by tuning the scaling
  variables at the hard edge, an optimal convergence rate of $O(1/N^2)$ can be achieved.
   \\
    3.~A Muttalib-Borodin type generalisation of (\ref{Kq1}) is known \cite[Jacobi prime case]{FI18}. Working
  analogous to that of Section \ref{S3.2} could be undertaken, although we refrain from doing that here.
  It would similarly be possible to obtain the analogue of Proposition \ref{P3.4} for the singular values
  of products of truncations of unitary ensembles \cite{KKS15}, which we know from \cite{KK16} can
  be cast in a P\'olya ensemble framework as products of Jacobi unitary ensembles. 
  \end{remark}

   \subsection*{Acknowledgements}
	This research is part of the program of study supported
	by the Australian Research Council Centre of Excellence ACEMS. We thank Mario Kieburg for
	feedback on a draft of this work.

  \appendix
\section*{Appendix A}
\renewcommand{\thesection}{A} 
\setcounter{equation}{0}
  
In random matrix theory there is special importance associated with the $\beta$ generalisation of (\ref{1.1}), specified
by the class of PDFs proportional to
\begin{equation}\label{A.1}
\prod_{l=1}^N w(x_l) \prod_{1 \le j < k \le N} |x_k - x_j|^\beta.
 \end{equation}
 The parameter $\beta$ is referred to as the Dyson index \cite{Dy62}, and in classical random matrix theory corresponds
 to the matrix ensemble being invariant with respect to conjugation by real orthogonal ($\beta = 1$),
 complex unitary ($\beta = 2$) and unitary symplectic matrices ($\beta = 4$). For general $\beta > 0$, (\ref{A.1})
 has the interpretation as the Boltzmann factor of a classical statistical mechanical system with particles
 repelling via the pair potential $- \log | x - y|$, confined by a one-body potential with Boltzmann factor $w(x)$,
 and interacting at the inverse temperature $\beta$. Also, with $w(x)$ one of the classical weights --- Gaussian,
 Laguerre or Jacobi --- (\ref{A.1}) for general $\beta > 0$ is the exact ground state wave function for particular
 quantum many body systems of Calogero-Sutherland type (this requires a change of variables in the Laguerre
 and Jacobi cases; see \cite{BF97a}).
 
 Our interest is in (\ref{A.1}) with the Jacobi weight (\ref{A.2}).
 Details of various realisations of  (\ref{A.1}) as an eigenvalue PDF in this case can be found in \cite[\S 1.1]{FK20}.
 While there are no tractable formulas for the $k$-point correlation functions for general $\beta > 0$, it turns out that
 for a particular class of Jacobi gap probabilities $E_{N,\beta}(0;J;w(x))$ --- this denoting the probability that there are no eigenvalues in the
 interval $J$ for the ensemble specified by the eigenvalue PDF (\ref{A.1}) --- evaluations are available in terms of particular
 multivariate hypergeometric functions; see \cite[Ch.~12 \& 13]{Fo10}, which are well suited to the analysis of the rate
 of convergence to the hard edge limit. This circumstance similarly holds true for the Laguerre case of (\ref{A.1}), for which
 an analysis of the rate of convergence has recently been carried out in \cite{FT19}.
 
 The starting point is the fact that for $J = (s,1)$, and for the parameter $b \in \mathbb Z_{\ge 0}$, a simple change
 of variables in the multi-dimensional integral defining $E_{N,\beta}(0;J;w(x))$ shows that as function of $s$ it
 is a power function times a polynomial (see \cite[\S 1.3]{FK20} for details),
  \begin{equation}\label{A.3}
E_{N,\beta}(0;(s,1);x^a(1-x)^b)=s^{N(a+1)+\beta N(N-1)/2}\sum_{p=0}^{bN}\gamma_p s^p,
\end{equation}
for some  coefficients $\gamma_p$. Moreover, we know from \cite[Eq.~(13.7) and Prop.~13.1.7]{Fo10} that this
polynomial can be identified as a particular multivariate hypergeometric function, generalising the Gauss hypergeometric
function
  \begin{equation}\label{A.4}
E_N(0;(s,1);x^a(1-x)^b)=s^{N(a+1)+\beta N(N-1)/2}  {}_2^{}F_1^{(\beta/2)} (
-N,-(N-1)-{2(a+1)}/\beta; 2b/\beta;(1-s)^b ).
\end{equation}
 In the last argument, the notation $(1-s)^b$ refers to $1-s$ repeated $b$ times. In the case $b = 1$,
 ${}_2^{}F_1^{(\beta/2)}$ coincides with the Gauss hypergeometric function independent of $\beta$.

 For general positive integer $b$ we will make use of the $b$-dimensional integral representation  \cite[Eq. (13.11)]{Fo10}
   \begin{multline}\label{A.5}
{}_2^{}   F_1^{(\beta/2)} (
r,-\tilde{b},\frac{2(b-1)}{\beta}+\tilde{a}+1;(u)^b ) =  \frac{1}{M_b(\tilde{a},\tilde{b},2/\beta)}\\
\times \int_{-1/2}^{1/2}dx_1\cdots\int_{-1/2}^{1/2}dx_b \, \prod_{l=1}^b e^{\pi ix_l (\ta-\tb)}|1+e^{2\pi i x_l}|^{\ta+\tb}(1+ue^{2\pi ix_l})^{-r} \prod_{1\leq j<k\leq b}|e^{2\pi i x_k}-e^{2\pi i x_j}|^{4/\beta} \\
=
\frac{N^{b\ta}}{M_b(\ta,\tb,2/\beta)} 
\int_{\mathcal C^b}  dx_1 \cdots dx_b \,  \prod_{l=1}^b e^{2\pi i x_l\ta}(1+N^{-1}e^{-2\pi i x_l})^{\ta+\tb}(1+uNe^{2\pi i x_l})^{-r} \\ \times
 \prod_{1\leq j<k\leq b}|e^{2\pi i x_k}-e^{2\pi i x_j}|^{4/\beta}
\end{multline}
for the parameters $r=-N$, $\tilde{b}=(N-1)+(2/\beta)(a+1)$, $\tilde{a}=2/\beta-1$. Here
 the normalisation $M_b(\ta,\tb,2/\beta)$ is the Morris integral, with gamma function evaluation
 (see e.g.~\cite[Eq. (1.18)]{FW08})
\begin{equation}\label{Mb}
M_b(\ta,\tb,2/\beta)=\prod_{j=0}^{b-1}\frac{\Gamma(1+\ta+\tb+2j/\beta)\Gamma(1+2(j+1)/\beta)}{\Gamma(1+\ta+2j/\beta)
\Gamma(1+\tb+2j/\beta)\Gamma(1+2/\beta)}.
\end{equation}
The second equality follows by
manipulating the integrand so that it is an analytic function of $z_l=e^{2\pi i x_l}$, then changing variables $z_l\mapsto z_lN$, and finally deforming each circle contour
to a contour $\mathcal C_z$, as detailed in \cite[Prop.~2]{Fo13}, and to be described next. It
starting at the origin in the complex $z$-plane, running along the negative real axis in the bottom
half plane to $z = - 1 - 0i$, then along a counter clockwise circle to $z = - 1 + 0i$, and finally back to the origin along the negative real axis in the upper
half plane.
 The contour $\mathcal C$ is the image of $\mathcal C_z$ in the complex $x$-plane under the mapping $z = e^{2 \pi i x}$.
 With an appropriate scaling of $u$, this second multidimensional integral is well suited to an asymptotic analysis, enabling an asymptotic analysis of
the hard edge limit in (\ref{A.4}).

To identify a structured form in the resulting expression, we have need for knowledge of inter-relations satisfied by
the multiple integrals
\begin{equation}\label{Mb1}
I_b(s)[f]:= \int_{\mathcal C^b} dx_1 \cdots dx_b \, f(x_1,\dots,x_b)   \prod_{l=1}^b e^{2\pi i x_l(2/\beta-1)}e^{e^{-2\pi i x_l}+(s/4)e^{2\pi i x_l}}
   \prod_{1\leq j<k\leq b}|e^{2\pi i x_k}-e^{2\pi i x_j}|^{4/\beta} 
   \end{equation}
   for $f = f_q := \sum_{l=1}^b e^{2 \pi i q x_l}$, $q=0,\pm 1, \pm 2$. The simplest, which follows immediately from the definitions, is that
 \begin{equation}\label{Mb2}
 {1 \over b} {d \over d s}   I_b(s)[f_0] = {1 \over 4}   I_b(s)[f_1] .
 \end{equation}
 Integration by parts techniques, well known in the theory of the Selberg integral \cite{Ao87}, \cite[\S 4.6]{Fo10}, reveals further relations.

\begin{proposition}\label{PA1}
We have
\begin{align*}
{s \over 16} I_b(s)[f_2] & = - {2 \over \beta} {d \over ds} I_b(s)[f_0] + {1 \over 4}  I_b(s)[f_0] \\
I_b(s)[f_{-2}] & = {s \over 4}  I_b(s)[f_0]  + 2 \Big (  {2 \over \beta} - 1  - {b \over \beta} \Big ) \Big ( {s \over b}  {d \over ds}  I_b(s)[f_0]  +
 \Big ( {2 \over \beta} - 1 \Big )  I_b(s)[f_0]  \Big ) \\
 I_b(s)[f_{-1}] & =  \Big ( {2 \over \beta} - 1 \Big )  I_b(s)[f_0]  + {s \over b}  I_b(s)[f_0].
\end{align*}
 \end{proposition}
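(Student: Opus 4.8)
The plan is to derive all three identities from a single mechanism: integration by parts in each $x_m$ against a monomial test factor, which is exactly the Aomoto-type manipulation referenced via \cite{Ao87} and \cite[\S 4.6]{Fo10}. First I would pass to the variables $z_l = e^{2\pi i x_l}$, so that $\partial/\partial x_m = 2\pi i\, z_m\, \partial/\partial z_m$, and rewrite the full integrand of (\ref{Mb1}) as an analytic function of the $z_l$. The single-particle weight is $z_l^{2/\beta-1} e^{z_l^{-1} + (s/4) z_l}$, while the pair factor is handled by continuing $|z_k - z_j|^2$ to $-(z_k-z_j)^2/(z_k z_j)$, as is legitimate on the contour $\mathcal C$ by the analysis of \cite[Prop.~2]{Fo13}; call the resulting analytic density $\Phi$, so that $I_b(s)[f_q] = \int_{\mathcal C^b} f_q\,\Phi\,\prod_l dx_l$ with $f_q = \sum_l z_l^q$. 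For $p \in \{1,0,-1\}$ I would then assert
\[
0 = \int_{\mathcal C^b} \sum_{m=1}^b \frac{\partial}{\partial x_m}\Big( z_m^{p}\, \Phi \Big)\, dx_1\cdots dx_b .
\]

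The heart of the computation is the logarithmic derivative. A direct calculation gives
\[
z_m \frac{\partial}{\partial z_m} \log \Phi = \Big(\tfrac{2}{\beta}-1\Big) - z_m^{-1} + \tfrac{s}{4} z_m + \frac{2}{\beta}\sum_{k \neq m}\Big( \frac{2 z_m}{z_m - z_k} - 1 \Big) ,
\]
so that the vanishing of the total-derivative integral gives $\int_{\mathcal C^b}\sum_m z_m^{p}\big[\,p + z_m\,\partial_{z_m}\log\Phi\,\big]\,\Phi\,\prod_l dx_l = 0$. The single-particle part produces the functions $f_{p}, f_{p-1}, f_{p+1}$, contributing $(p+\tfrac{2}{\beta}-1)I_b(s)[f_p] - I_b(s)[f_{p-1}] + \tfrac{s}{4}I_b(s)[f_{p+1}]$. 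The whole point of the pair interaction is that after summing over $m$ it collapses: using the pairing identity $\sum_m\sum_{k\neq m} z_m^{p+1}/(z_m-z_k) = \sum_{j<k}(z_j^{p+1}-z_k^{p+1})/(z_j-z_k)$, this two-body sum equals $\binom{b}{2}$ for $p=0$, equals $(b-1)f_1$ for $p=1$, and vanishes for $p=-1$. Consequently the full two-body contribution $\tfrac{2}{\beta}\sum_m z_m^p\sum_{k\neq m}\big(2z_m/(z_m-z_k)-1\big)$ cancels entirely for $p=0$, reduces to $\tfrac{2}{\beta}(b-1)I_b(s)[f_1]$ for $p=1$, and to $-\tfrac{2}{\beta}(b-1)I_b(s)[f_{-1}]$ for $p=-1$. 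Each value of $p$ thus returns a closed linear relation.

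Assembling the output is then bookkeeping. Taking $p=0$ yields $I_b(s)[f_{-1}] = (\tfrac{2}{\beta}-1)I_b(s)[f_0] + \tfrac{s}{4} I_b(s)[f_1]$; taking $p=1$ gives $\tfrac{s}{4}I_b(s)[f_2] = I_b(s)[f_0] - \tfrac{2b}{\beta}I_b(s)[f_1]$; and taking $p=-1$ gives $I_b(s)[f_{-2}] = \tfrac{s}{4}I_b(s)[f_0] + 2(\tfrac{2}{\beta}-1-\tfrac{b}{\beta})I_b(s)[f_{-1}]$. In the first two I would eliminate $I_b(s)[f_1]$ using the already-established (\ref{Mb2}), namely $I_b(s)[f_1] = (4/b)\,\tfrac{d}{ds} I_b(s)[f_0]$; the $p=1$ relation then becomes the stated $f_2$ identity after dividing by $4$, and the $p=0$ relation becomes $I_b(s)[f_{-1}] = (\tfrac{2}{\beta}-1)I_b(s)[f_0] + \tfrac{s}{b}\tfrac{d}{ds}I_b(s)[f_0]$ — which is precisely the bracketed factor appearing inside the stated $f_{-2}$ identity, so substituting it into the $p=-1$ relation produces that identity verbatim. (This consistency indicates that the $\tfrac{s}{b}I_b(s)[f_0]$ written in the last line of the statement should carry a $\tfrac{d}{ds}$.)

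The step I expect to be the genuine obstacle is justifying that the boundary terms vanish, i.e. that $\int_{\mathcal C}\partial_{x_m}(z_m^{p}\Phi)\,dx_m = 0$ for each $p \in \{1,0,-1\}$. This rests on the precise shape of the contour $\mathcal C$, the image of the keyhole contour $\mathcal C_z$ running along the negative real axis, around $z=-1$, and back: I must check that the analytically continued integrand, multiplied by $z_m^{\pm 1}$, returns to the same value along the two sides of the cut and decays at the endpoint $z_m \to 0^-$, where the factor $e^{z_m^{-1}}$ with $z_m^{-1}\to -\infty$ supplies rapid exponential decay. Verifying single-valuedness of the continued pair factor $\prod_{j<k}(z_k-z_j)^{4/\beta}(z_k z_j)^{-2/\beta}$ around $\mathcal C_z^b$, and that the $z_m^{-1}$ and $z_m^{-2}$ insertions do not spoil integrability at the origin, is where the contour analysis of \cite{Fo13} is indispensable; the remaining algebra is routine.
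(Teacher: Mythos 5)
Your proof is correct and follows essentially the same route as the paper's: an Aomoto-type integration by parts over $\mathcal C^b$ (the paper carries out only your $p=1$ case explicitly, inserting $e^{2\pi i x_l}$, symmetrizing the pair term to get $\tfrac{2b}{\beta}I_b(s)[f_1] - I_b(s)[f_0] + \tfrac{s}{4}I_b(s)[f_2]=0$, and eliminating $I_b(s)[f_1]$ via (\ref{Mb2})), while your $p=0$ and $p=-1$ cases are exactly the ``similar working'' the paper delegates to \cite[\S 3.2]{FT19}. Your flag on the third identity is also right: as printed it is missing a derivative, and should read $I_b(s)[f_{-1}] = \big(\tfrac{2}{\beta}-1\big)I_b(s)[f_0] + \tfrac{s}{b}\tfrac{d}{ds}I_b(s)[f_0]$, which is what your $p=0$ relation produces, matches the bracketed factor inside the $f_{-2}$ identity, and is the form actually used in deriving (\ref{St0}).
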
 
 
 \begin{proof}
 According to the fundamental theorem of calculus
 $$
 I_b(s) \Big [ \sum_{l=1}^b {\partial \over \partial x_l} e^{2 \pi i x_l} \Big ] = 0.
 $$
 Performing the differentiations on the LHS, this implies
\begin{multline*}
0 = \frac{2}{\beta}  I_b(s)[f_{1}]  -  I_b(s)[f_{0}] 
+\frac{s}{4}  I_b(s)[f_{2}]  \\ +\frac{2}{\beta} I_b(s) \bigg [ \sum_{l\ne k}^b 
e^{2\pi i x_l}\left(
\frac{e^{2\pi i x_l}}{e^{2\pi i x_l}-e^{2\pi i x_k}}+\frac{e^{-2\pi i x_l}}{e^{2\pi i x_l}-e^{2\pi i x_k}}
\right)\bigg ] = 0.
\end{multline*}
Symmetrising the integrand in the final average reduces this to
\begin{align*}
\frac{2b}{\beta}  I_b(s)[f_1]  -  I_b(s)[f_{0}]  
+\frac{s}{4}  I_b(s)[f_{2}] = 0.
\end{align*}
Recalling now (\ref{Mb2}) gives the first of the stated relations.

The other two follow by similar working. In fact they have been derived
previously; see \cite[\S 3.2]{FT19}.
 \end{proof} 

\begin{proposition}
Define
  \begin{multline}
  E^{\rm hard}(s;b) =
 \frac{e^{-\beta s/8}b!}{(\Gamma(2/\beta))^b}   \\
  \times
 \int_{\mathcal C^b} 
 dx_1\cdots dx_b  \prod_{l=1}^b e^{2\pi i x_l(2/\beta-1)}e^{e^{-2\pi i x_l}+(s/4)e^{2\pi i x_l}}  \prod_{1\leq j<k\leq b}|e^{2\pi i x_k}-e^{2\pi i x_j}|^{4/\beta}.
\end{multline}
For general $\beta>0$ and $b\in \mathbb{Z}_{\geq0}$, we have
\begin{equation}\label{St0}
E_N(0;(1-s/4N^2,1);x^a(1-x)^b)=  E^{\rm hard}(s;b) +\frac{1}{N}\left(
\frac{2(1+a+b)}{\beta}-1
\right)s {d \over d s}  E^{\rm hard}(s;b)  +O \Big ( {1 \over N^2} \Big ).
\end{equation}

\end{proposition}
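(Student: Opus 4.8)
The plan is to start from the exact evaluation (\ref{A.4}), insert the hard edge scaling $s \mapsto 1 - s/(4N^2)$ so that the argument $1-s$ of the hypergeometric function becomes $s/(4N^2)$, and then represent ${}_2F_1^{(\beta/2)}$ by the second (contour) integral in (\ref{A.5}) with $r = -N$, $\tilde b = (N-1) + (2/\beta)(a+1)$, $\tilde a = 2/\beta - 1$ and $u = s/(4N^2)$ (so that $uN = s/(4N)$). The quantity $E_N(0;(1-s/4N^2,1);x^a(1-x)^b)$ then factorises as a prefactor $(1 - s/(4N^2))^{N(a+1) + \beta N(N-1)/2}$, the normalisation $N^{b\tilde a}/M_b(\tilde a,\tilde b,2/\beta)$, and the $b$-dimensional contour integral. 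The first step is to confirm that the $N\to\infty$ limit of each of these three factors reproduces exactly $E^{\rm hard}(s;b)$: the integrand converges pointwise on $\mathcal C^b$ to that of $I_b(s)[1]$ because $(1 + (s/4N)e^{2\pi i x_l})^N \to e^{(s/4)e^{2\pi i x_l}}$ and $(1 + N^{-1}e^{-2\pi i x_l})^{\tilde a + \tilde b} \to e^{e^{-2\pi i x_l}}$; the prefactor tends to $e^{-\beta s/8}$; and, using (\ref{Mb}) together with the cancellation $\Gamma(1+\tilde a + 2j/\beta) = \Gamma(2(j+1)/\beta)$ (valid since $\tilde a = 2/\beta - 1$), the $\tilde b$-independent part of $M_b$ collapses to $b!/(\Gamma(2/\beta))^b$, while its $\tilde b$-dependent part contributes $N^{b\tilde a}(1 + O(1/N))$ by (\ref{eq1}), cancelling the factor $N^{b\tilde a}$.

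Next I would extract the $O(1/N)$ correction from each factor by retaining the next term in the relevant expansion; writing each factor as an exponential, the corrections are additive. The scaling factor gives $(1 + (s/4N)e^{2\pi i x_l})^N = e^{(s/4)e^{2\pi i x_l}}(1 - (s^2/32N)e^{4\pi i x_l} + O(N^{-2}))$, supplying a term $-(s^2/32N)\,f_2$; the factor $(1 + N^{-1}e^{-2\pi i x_l})^{\tilde a + \tilde b}$, with $\tilde a + \tilde b = N + (2/\beta)(a+2) - 2$, supplies $(1/N)\big((2/\beta)(a+2) - 2\big)\,f_{-1} - (1/2N)\,f_{-2}$; the prefactor supplies the scalar $-(s/4N)(a+1-\beta/2)$; and the gamma-ratio in $N^{b\tilde a}/M_b$ supplies a further scalar $1/N$ correction computed from the $O(1/z)$ term of (\ref{eq1}) applied to each of the $b$ ratios $\Gamma(\tilde b + 1 + \tilde a + 2j/\beta)/\Gamma(\tilde b + 1 + 2j/\beta)$ together with the expansion of $\tilde b^{\tilde a}$ about $N^{\tilde a}$. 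Inserting these into the integral and using the notation of (\ref{Mb1}) turns the $O(1/N)$ part of $E_N$ into a fixed linear combination of $I_b(s)[f_2]$, $I_b(s)[f_{-2}]$, $I_b(s)[f_{-1}]$ and $I_b(s)[f_0]$, all carrying the common leading normalisation $e^{-\beta s/8}b!/(\Gamma(2/\beta))^b$.

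The final step is to collapse this combination using Proposition \ref{PA1} (and (\ref{Mb2})), which expresses $I_b(s)[f_2]$, $I_b(s)[f_{-2}]$ and $I_b(s)[f_{-1}]$ in terms of $I_b(s)[f_0]$ and its $s$-derivative. Since $E^{\rm hard}(s;b) = (e^{-\beta s/8}b!/(\Gamma(2/\beta))^b)\,I_b(s)[1]$ and $I_b(s)[f_0] = b\,I_b(s)[1]$, one re-expresses $\tfrac{d}{ds}I_b(s)[f_0]$ and $I_b(s)[f_0]$ back in terms of $E^{\rm hard}(s;b)$ and $\tfrac{d}{ds}E^{\rm hard}(s;b)$, accounting for the product-rule contribution of the $e^{-\beta s/8}$ prefactor.

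I expect the main obstacle to be precisely this last bookkeeping: one must verify that, after the substitutions of Proposition \ref{PA1}, every scalar multiple of $I_b(s)[f_0]$ (coming from the prefactor, from the normalisation, and from the $I_b(s)[f_0]$ pieces generated by the PA1 relations) cancels, leaving only a multiple of $s\tfrac{d}{ds}E^{\rm hard}(s;b)$, and that this multiple equals $\tfrac{2(1+a+b)}{\beta} - 1$. This cancellation of the non-derivative part is a nontrivial consistency check, and is where an algebra slip would surface. By contrast, the convergence and termwise-integration issues are comparatively routine, the asymptotic expansions of the integrand being uniform along $\mathcal C$ with the requisite decay at the non-compact end of the contour, as in the analogous estimates of \cite{Bo16,FT19}.
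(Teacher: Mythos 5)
Your proposal follows essentially the same route as the paper's own proof: the same factorisation of (\ref{A.4})--(\ref{A.5}) into the power prefactor, the Morris-integral normalisation $N^{b\tilde a}/M_b(\ta,\tb,2/\beta)$, and the $b$-dimensional contour integral; the same $O(1/N)$ expansions of each factor (your coefficients of $f_2$, $f_{-1}$, $f_{-2}$ and the two scalar terms agree exactly with those assembled in (\ref{St1})); and the same final collapse via Proposition \ref{PA1} together with (\ref{Mb2}). The only gap is that you stop short of the concluding bookkeeping which you correctly identify as the crux, and which the paper carries out to arrive at the stated coefficient $\frac{2(1+a+b)}{\beta}-1$; otherwise your plan is correct.
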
 

\begin{proof}
According to (\ref{A.4}), the analysis of $E_N(0;(1-s/4N^2,1);x^a(1-x)^b)$ requires replacing $u$ by $s/4N^2$ in (\ref{A.5}).
With this done, we see there is a dependency on $N$ both outside and inside the integral. For both, the large $N$ form
can readily be computed. The factor outside the integral involves the Morris integral, which has the evaluation (\ref{Mb}).
Recalling the values of  $\ta$ and $\tb$, and use  of the ratio of gamma function asymptotic formula \eqref{eq1}
shows
\begin{align*}
\frac{N^{b\ta}}{M_b(\ta,\tb,2/\beta)}=\frac{(\Gamma(2/\beta))^b}{b!}\left(
1-\frac{(2/\beta-1)^b}{N}\Big(
\frac{2a+b+1}{\beta}-1 \Big ) +O\Big ( {1 \over N^2}  \Big )
\right).
\end{align*}
For the $N$ dependent factors in the integrand, a simple power series expansion shows
\begin{multline*}
\prod_{l=1}^b e^{2\pi i x_l\ta}(1+N^{-1}e^{-2\pi i x_l})^{\ta+\tb}(1+(s/4N)e^{2\pi i x_l})^{-r}=\left(
\prod_{l=1}^b e^{2\pi i x_l\ta}e^{e^{-2\pi i x_l}+(s/4)e^{2\pi i x_l}}\right) \\
 \times\left(1+\frac{1}{N}\left(
-2+\frac{2}{\beta}(a+2)
\right)\sum_{l=1}^b e^{-2\pi i x_l}-\frac{1}{2N}\sum_{l=1}^b e^{-4\pi i x_l}-\frac{s^2}{32N}\sum_{l=1}^be^{4\pi i x_l}+ O\Big ( {1 \over N^2}  \Big )
\right).
\end{multline*}

Substituting these expansions in (\ref{A.5}), we see from (\ref{A.4}) that
\begin{multline}\label{St1}
E_N(0;(1-s/4N^2,1);x^a(1-x)^b) 
 =\frac{e^{-\beta s/8}b!}{(\Gamma(2/\beta))^b} \\
 \times \int_{\mathcal C^b} dx_1 \cdots \ dx_b \,
  \prod_{l=1}^b e^{2\pi i x_l(2/\beta-1)}e^{e^{-2\pi i x_l}+(s/4)e^{2\pi i x_l}}   \prod_{1\leq j<k\leq b}|e^{2\pi i x_k}-e^{2\pi i x_j}|^{4/\beta}  \\
\times\left\{
1+\frac{1}{N}\left[\frac{s\beta}{8}
\left(1-\frac{2(a+1)}{\beta}\right)-\left(
\frac{2}{\beta}-1
\right)b\left(
\frac{2a+b+1}{\beta}-1
\right)\right.\right. 
\\ \left.\left.+\left(-2+\frac{2}{\beta}(a+2)\right)\sum_{l=1}^b e^{-2\pi i x_l}-\frac{1}{2}\sum_{l=1}^b e^{-4\pi i x_l}-\frac{s^2}{16}\sum_{l=1}^b e^{4\pi i x_l}
\right]+ O\Big ( {1 \over N^2}  \Big ) \right\}.
\end{multline}
At $O(1/N)$ the multidimensional integral in this expression can be written in terms of the notation (\ref{Mb1}) as
\begin{multline*}
\left[\frac{s\beta}{8}
\left(1-\frac{2(a+1)}{\beta}\right)-\left(
\frac{2}{\beta}-1
\right)b\left(
\frac{2a+b+1}{\beta}-1
\right) \right ] {1 \over b} I_b[s][f_0] \\
+ \left(-2+\frac{2}{\beta}(a+2)\right)  I_b[s][f_{-1}] - \frac{1}{2}  I_b[s][f_{-2}] -\frac{s^2}{16}\ I_b[s][f_{2}] .
\end{multline*}
After simplification using Proposition \ref{PA1}, and substitution back in (\ref{St1}), an expansion
equivalent to (\ref{St0}) results.

\end{proof}

  \appendix
\section*{Appendix B}
\renewcommand{\thesection}{B} 
\setcounter{equation}{0}
The application given to (\ref{2.17a}) in the main text is to derive the integral form of the kernel
(\ref{Me3}). Another application relates to the moments of the spectral density, since setting
$x=y=1$, multiplying both sides by $t^p$, and integrating both sides from $0$ to $\infty$
using integration by parts on the LHS shows
\begin{equation}\label{B.1}
k \int_0^\infty t^k K_N(t,t)  \, dt=     N { \mathcal M[w](N+1)  \over  \mathcal M[w](N)}  \int_0^\infty t^k p_{N-1}(t) q_N(t) \, dt.
\end{equation}
And since the P\'olya ensembles are determinantal, $K_N(t,t) = \rho_{(1)}(t)$, so the LHS is $k$ times
the $k$-th moment of the spectral density. 

Suppose now for some fixed $r \in \mathbb Z^+$, and any fixed $i \in \mathbb Z$
\begin{equation}\label{B.2}
t  p_{N-i}(t) =  \sum_{s=-r}^1 \alpha_{N-i,s} p_{N-i+s}(t).
\end{equation}
Moreover, suppose that the coefficients $\alpha_{N-1,s}$ have the large
$N$ form $\alpha_{N-i,s}/N^{\hat{r}} \to \hat{\alpha}_{s}$ for some
$\hat{r}$, and so
\begin{equation}\label{B.2a}
t  p_{N-i}(t) \mathop{\sim}\limits_{N \to \infty} N^{\hat{r}}  \sum_{s=-r}^1 \hat{\alpha_{s}} p_{N-i+s}(t).
\end{equation}
We begin by substituting for $t p_{N-1}(t)$ in (\ref{B.1}) using (\ref{B.2a}) with $i=1$. 
In the case $k=1$ only the term $s=1$ contributes due to the orthogonality (\ref{2.8}), so the integral in (\ref{B.1}) has
the large $N$ evaluation $N^{\hat{r}} \hat{\alpha}_1$.

For $k\ge 2$ we next use (\ref{B.2a}) to expand $t p_{N-i+s}(t)$, and in so doing
reducing the exponent in the integrand down to $k-2$. In the case $k=2$ the orthogonality (\ref{2.8})
implies  the integral in (\ref{B.1}) has the large $N$ evaluation $2 N^{2 \hat{r}} \hat{\alpha}_0  \hat{\alpha}_1$.
For $k \ge 3$ we continue by a further use (\ref{B.2a}), reducing the power in (\ref{B.1}) down to $t^{k-3}$,
and repeat so after a total of $k$ applications of  (\ref{B.2a}) the integrand is a linear combination of
$\{ p_l(t) \}$ times $q_N(t)$. By the orthogonality (\ref{2.8}), only the coefficient of $p_N(t)$ in the linear combination contributes to the
integral in (\ref{B.1}). Each term in the linear combination can be related to a weighted lattice path, consisting of
$k$ steps, which at each step and for some $s=1,0,\dots,-r$ changes height by $s$ units. Only those paths which
change height by a total of exactly one unit make up the coefficient of $p_N(t)$, showing that
\begin{equation}\label{B.3}
k \lim_{N \to \infty} {1 \over  N^{k \hat{r}+1} }  { \mathcal M[w](N) \over  \mathcal M[w](N+1)}  \int_0^\infty t^k K_N(t,t)  \, dt =   
\sum_{R}   \bigg ( {k \atop a_1,a_0,\dots,a_{-r}} \bigg )  \prod_{s=-r}^1  \hat{\alpha}_{s}^{a_s},
\end{equation}
where the restriction $R$ on the non-negative integers $a_1,\dots, a_{-r}$ is specified by
\begin{equation}\label{B.4}
 R: \quad    \sum_{s=-r}^1  a_s = k, \: \:  \sum_{s=-r}^1 s a_s = 1 
\end{equation}	
(cf.~\cite[Prop.~2.6]{Ha18}). Furthermore, we observe that with $[u]f(u)$ denoting the coefficient of $u$ in
the power series expansion of $f(u)$ the sum in (\ref{B.3}) can be expressed in terms of a generating function
according to
\begin{equation}\label{B.4a}
\sum_{R}   \bigg ( {k \atop a_1,a_0,\dots,a_{-r}} \bigg )  \prod_{s=-r}^1  \hat{\alpha}_{s}^{a_s} =
[u] \, \Big ( u \hat{\alpha_1} +  \hat{\alpha_0} + \cdots + u^{-r}  \hat{\alpha_r} \Big )^k.
\end{equation}	

Let us specialise now to the product of $M$ Laguerre ensembles as in Section \ref{S3.1}.
For convenience, with $p_n(x)$ given by (\ref{W4}), introduce the rescaled polynomial
\begin{equation}\label{B.5}
P_n(x) = {1 \over c_n } p_n(x), \qquad   c_n = n!   \mathcal M[w](n+1).
\end{equation}
The advantage of this normalisation is that the recurrences corresponding to (\ref{B.2}) and its large
$N$ asymptotics (\ref{B.2a}) have been computed by Lambert \cite[Props.~4.3 \& 4.10]{La18}, with
the latter reading
\begin{equation}\label{B.5a}
t P_{N-i}(t)  \mathop{\sim}\limits_{N \to \infty}  N^M \sum_{s=-M}^1 N^s \hat{\beta}_s P_{N-i+s}(t), \qquad  \hat{\beta}_s = \binom{M+1}{-s+1}.
\end{equation}
Proceeding as in the derivation of (\ref{B.3}), and making use too of (\ref{B.4a}),  we see that for $k \ge 1$,
\begin{multline}\label{B.6}
k \lim_{N \to \infty} {1 \over  N^{k M+1} }  \int_0^\infty t^k K_N(t,t)  \, dt  =   [u] \, \Big ( u \hat{\beta_1} +  \hat{\beta_0} + \cdots + u^{-M}  \hat{\beta_M} \Big )^k 
\\ = [u^{k+1}]  ( 1 + 1/u)^{k(M+1)}
 = \binom{k(M+1)}{k+1},
\end{multline}
where the second equality follows by recognising the series, with the $\hat{\beta}_s$ as in (\ref{B.5a}), as a binomial expansion, so it can be summed, while the third equality
follows by applying the binomial expansion to power series expand the resulting expression.
Here we recognise
\begin{equation}\label{B.7}
{1 \over k}  \binom{k(M+1)}{k+1} =   {1 \over kM + 1}  \binom{k(M+1)}{k} 
\end{equation}
as the $k$-th Fuss-Catalan number, indexed by $M$, with the Catalan numbers the case $M=1$.
This combinatorial sequence is well known
to give the scaled moments of the spectral density for the product of $M$ Laguerre ensembles (or equivalently the
scaled moments of the squared singular values of the product of $M$ standard complex Gaussian matrices);
see \cite{PZ11,FL15}.

%
%

  \small
\bibliographystyle{abbrv}

\begin{thebibliography}{1}

\bibitem{AIK13}
G. Akemann, J. R. Ipsen, and M. Kieburg, 
\emph{Products of rectangular random matrices: singular values and progressive scattering.} 
Phys. Rev. E \textbf{88}, (2013) 052118.

\bibitem{AKW13}
G. Akemann, M. Kieburg, and L. Wei,
\emph{Singular value correlation functions for products of Wishart random matrices.}
J. Phys. A \textbf{46}, (2013) 275205.


\bibitem{andrews99}
G. Andrews, R. Askey and R. Roy.
\newblock \emph{Special functions}.
\newblock Cambridge University Press, Cambridge, 1999.

  \bibitem{Ao87}
K.~Aomoto, \emph{Jacobi polynomials associated with {Selberg's} integral}, SIAM
  J. Math. Analysis \textbf{18}, (1987) 545--549.


 \bibitem{BF97a}
T.H. Baker and P.J. Forrester, \emph{The {Calogero-Sutherland} model and
  generalized classical polynomials}, Commun. Math. Phys. \textbf{188}, (1997)
  175--216.
  
  
  \bibitem{BGS09}
M.~Bertola, M.~Gekhtman, and J.~Szmigielski, \emph{The {C}auchy two-matrix
  model}, Commun. Math. Phys. \textbf{287}, (2009) 983--1014. 
  
  \bibitem{BGS14}
M. Bertola, M. Gekhtman and J. Szmigielski.
\newblock \emph{Cauchy-Laguerre two-matrix model and the Meijer G-random point field}.
\newblock Commun. Math. Phys., \textbf{326}, (2014) 111-144.
  
  \bibitem{Bo16}
		F.~Bornemann,  \emph{A note on the expansion of the smallest eigenvalue distribution of the LUE at the hard edge}, The Annals of Applied Probability, \textbf{26}, (2016) 1942-1946.
		
\bibitem{Bo98}
A. Borodin,
\emph{Biorthogonal ensembles.}
Nucl. Phys. B \textbf{536}, (1998) 704--732.		
		



\bibitem{Dy62}
F.J. Dyson, \emph{Statistical theory of energy levels of complex systems {I}},
  J. Math. Phys. \textbf{3}, (1962) 140--156.
  
 \bibitem{EGP16}
		A.~Edelman,  A.~Guionnet and S.~P\'ech\'e,  \emph{Beyond universality in random matrix theory}, The Annals of Applied Probability, \textbf{26}, (2016) 1659-1697. 


\bibitem{Fo10}
P.J.~Forrester,  \emph{Log-gases and random matrices,} LMS-34, Princeton University Press, 2010.

\bibitem{Fo13}
  P.J.~Forrester, \emph{Asymptotics of spacing distributions at the hard edge for $\beta$-ensembles},
		Random Matrices: Th. Appl.  \textbf{2} (2013), 1350002.	
		
\bibitem{Fo14}  P.J. Forrester, \emph{Eigenvalue statistics for product complex Wishart matrices}, J. Phys.  A  \textbf{47}, (2014) 345202.

 \bibitem{Fo20a}
 P.J.~Forrester, \emph{Differential identities for the structure function of some random matrix ensembles},
 arXiv:2006.00668  	
 
 \bibitem{Fo20b}
 P.J.~Forrester, \emph{Quantifying dip-ramp-plateau for the Laguerre unitary ensemble structure function},
 arXiv:2007.07473 	
 
  \bibitem{FI18} 
  P.J.~Forrester and J.R.~Ipsen, \emph{Selberg integral theory and Muttalib--Borodin ensembles}, {\it Adv. Appl. Math.} \textbf{95},  152--176 (2018).
  
  
  \bibitem{FK16}
P.J.~Forrester and M.~Kieburg.
\newblock \emph{Relating the Bures measure to the Cauchy two-matrix model}.
\newblock Commun. Math. Phys., \textbf{342} (2016), 151-187.
  
		

\bibitem{FK19}
P.J. Forrester and S.~Kumar, 
\emph{Recursion scheme for the largest $\beta$-Wishart-Laguerre
eigenvalue and Landauer conductance in quantum transport}, J. Phys. A \textbf{52} (2019), 42LT02. 		
		

\bibitem{FK20}
P.J.~Forrester and S.~Kumar, \emph{Computable structural formulas for the distribution of the $\beta$-Jacobi
eigenvalues}, arXiv:2006.02238.


\bibitem{FL19}
P. Forrester and S.-H.~Li.
\newblock \emph{Fox H-kernel and $\theta$-deformation of the Cauchy two-matrix model and Bures ensemble}.
\newblock Int. Math Res. Not., (2019) rnz028.

\bibitem{FL15}
P. J. Forrester, and D.-Z. Liu,
\emph{Raney distributions and random matrix theory.}
J. Stat. Phys. 158 (2015) 1051.


\bibitem{FT19}  P. J. Forrester and A. K.~Trinh,
\emph{Finite size corrections at the hard edge for the Laguerre $\beta$ ensemble},
Stud. Appl. Math. \textbf{143}, (2019) 315--336.

\bibitem{FW15}
P.~J. Forrester and D.~Wang, \emph{Muttalib--Borodin ensembles in random matrix
  theory --- realisations and correlation functions}, Elec. J. Probab.
  \textbf{22}, (2017) 54 (43 pages).

\bibitem{FW08}
P.J.~Forrester and S.O.~Warnaar.
\emph{The importance of the Selberg integral},
Bull. Amer. Math. Soc., \textbf{45}, (2008) 489-534.

\bibitem{FKK17}
Y.-P. F\"orster, M. Kieburg, and H.~K\"osters, \emph{Polynomial ensembles and P\'olya
frequency functions}, arXiv:1710.08794.

 \bibitem{HHN16}
  W.~Hachem, A.~Hardy and J.~Najim, \emph{Large complex correlated Wishart matrices: the
  Pearcey kernel and expansion at the hard edge}, Elec. J. Probab. \textbf{21}, (2016) 1--36.
  
 \bibitem{Ha18}
 A.~Hardy, \emph{Polynomial ensembles and recurrence coefficients},
 Constr.~Approx. \textbf{48}, (2018) 137--162.  



\bibitem{KK16}
M.~ Kieburg and H. K{\"o}sters, \emph{Exact relation between singular value and eigenvalue statistics},
Random Matrices Theory Appl. \textbf{5} (2016), 1650015 (57 pages).


\bibitem{KK19}
M. Kieburg and H. K\"osters. \emph{Products of random matrices from polynomial ensembles}, Ann. Inst. H. Poincar\'e, Probab. Statist. \textbf{55} (2019), 98-126.

\bibitem{KKS15}
M. Kieburg, A. B. J. Kuijlaars, and D. Stivigny, 
\emph{Singular value statistics of matrix products with truncated unitary matrices.} 
Int. Math. Res. Not. \textbf{2016}, (2016) 3392.

\bibitem{Ko67}
J.D.E. Konhauser, \emph{Biorthogonal polynomials suggested by the {L}aguerre
  polynomials}, Pacific J. Math. \textbf{21}, (1967)  303--314.

\bibitem{KS14}
A.B.J. Kuijlaars, and D. Stivigny, 
\emph{Singular values of products of random matrices and polynomial ensembles.} 
Random Matrices: Theor. Appl. \textbf{3}, (2014) 1450011.

\bibitem{KZ14}
A.B.J. Kuijlaars, and L. Zhang, 
\emph{Singular values of products of Ginibre random matrices, multiple orthogonal polynomials and hard edge scaling limits.} 
Commun. Math. Phys. \textbf{332}, (2014) 759--781.


  \bibitem{Ku19} S. Kumar, \emph{Recursion for the Smallest Eigenvalue Density of beta-Wishart-Laguerre Ensemble},
  J. Stat. Phys. {\bf 175}, (2019) 126--149.
  
 \bibitem{La18}
 G.~Lambert, \emph{Limit theorems for biorthogonal ensembles and related combinatorial identities},
 Adv.~Math. \textbf{329}, (2018) 590--648. 
  
  \bibitem{MSH09}
A. M. Mathai, R. K. Saxena, and H. J. Haubold
\emph{The H-function: theory and applications}. 
Springer Science \& Business Media, 2009.

 \bibitem{MMM19} L.~Moreno-Pozas, D.~Morales-Jimenez and M.R.~McKay,
 \emph{Extreme eigenvalue distributions of Jacobi ensembles: new exact
 representations, asymptotics and finite size corrections}, Nucl.~Phys.~B {\bf 947}, (2019) 114724.

\bibitem{Mu95}
K. A. Muttalib, 
\emph{Random matrix models with additional interactions.} 
J. Phys. A \textbf{28}, (1995) L159.


   \bibitem{PS11}
L.~Pastur and M.~Shcherbina, \emph{Eigenvalue distribution of large random
  matrices}, American Mathematical Society, Providence, RI, 2011.
  
  \bibitem{PZ11}
K. A. Penson, and K. Zyczkowski,
\emph{Product of Ginibre matrices: Fuss--Catalan and Raney distributions.} 
Phys. Rev. E \textbf{83} (2011) 061118.
  
  \bibitem{Pe24}
K.~Pearson, \emph{Historical note on the origin of the normal curve of errors}, Biometrica \textbf{16} (1924), 402--404.

\bibitem{PS16}
		A.~Perret and G.~Schehr, \emph{Finite N corrections to the limiting distribution of the smallest eigenvalue of Wishart complex matrices},
		 Random Matrices: Theory and Applications, \textbf{5} (2016), 1650001.

\bibitem{tricomi51}
F. Tricomi and A. Erd\'elyi.
\newblock \emph{The asymptotic expansion of a ratio of Gamma functions.}
\newblock Pacific J. Math., \textbf{1}, (1951) 133-142.

\bibitem{WW65}
E.T. Whittaker and G.N. Watson, \emph{A course of modern analysis}, 2nd ed.,
  Cambridge University Press, Cambridge, 1965.
  
  \bibitem{Zh15}
L. Zhang, 
\emph{Local universality in biorthogonal Laguerre ensembles.} 
J. Stat. Phys. \textbf{161}, (2015) 688--711.



\end{thebibliography}

\end{document}